\newtheorem{theorem}{Theorem}
\newtheorem{lemma}{Lemma}
\newtheorem{remark}{Remark}
\newtheorem{proposition}{Proposition}
\newtheorem{definition}{Definition}
\newtheorem{corollary}{Corollary}
\newtheorem{example}{Example}
\definecolor{darkgreen}{RGB}{0,100,0}
\definecolor{darkmagenta}{RGB}{139,0,139}
\definecolor{darkblue}{RGB}{40,0,200}
\newcommand{\jd}[1]{{\color{darkblue}{#1}}}
\newcommand{\rd}{\mathrm{d}}
\newcommand{\EE}{\mathbb{E}}
\newcommand{\NN}{\mathbb{N}}
\newenvironment{proof}{\begin{trivlist}
    \item[\hskip\labelsep{\it Proof.}]}{$\hfill\Box$\end{trivlist}}
\newcommand{\abs}[1]{\left\vert #1 \right\vert}
\newcommand{\norm}[1]{\left\Vert #1 \right\Vert}
\newcommand{\R}{\mathbb{R}}
\renewcommand{\P}{\mathbb{P}}
\newcommand{\loc}{\text{\rm loc}}
\renewcommand{\a}{\alpha}
\begin{document}

\begin{frontmatter}

\title{Discrepancy bounds for uniformly ergodic Markov chain quasi-Monte Carlo}
\runtitle{Uniformly ergodic Markov chain quasi-Monte Carlo}


\author{\fnms{Josef} \snm{Dick}\corref{Corresponding author}\ead[label=e1]{josef.dick@unsw.edu.au}}
\address{School of Mathematics and Statistics \\ The University of New South Wales \\ Sydney, NSW 2052, Australia \\ \printead{e1}}
\and
\author{\fnms{Daniel} \snm{Rudolf}\ead[label=e2]{daniel.rudolf@uni-jena.de}}
\address{
Inst. f\"ur Math.\\
Universit\"at Jena\\
Ernst-Abbe-Platz 2\\
07743 Jena,
Germany\\ \printead{e2}}
\and
\author{\fnms{Houying} \snm{Zhu}\ead[label=e3]{houying.zhu@unsw.edu.au}}
\address{School of Mathematics and Statistics \\ The University of New South Wales \\ Sydney, NSW 2052, Australia \\  \printead{e3}}

\runauthor{Dick, Rudolf, Zhu}

\begin{abstract}
Markov chains can be used to generate samples whose distribution approximates a given target distribution. The quality of the samples of such Markov chains can be measured by the discrepancy between the empirical distribution of the samples and the target distribution. We prove upper bounds on this discrepancy under the assumption that the Markov chain is uniformly ergodic and the driver sequence is deterministic rather than independent $U(0,1)$ random variables. In particular, we show the existence of driver sequences for which the discrepancy
of the Markov chain from the target distribution with respect to certain test sets converges with (almost) the usual Monte Carlo rate of $n^{-1/2}$.
\end{abstract}

\begin{keyword}[class=AMS]
\kwd[Primary ]{60J22, 65C40, 62F15}
\kwd[; secondary ]{65C05, 60J05}
\end{keyword}

\begin{keyword}
\kwd{Markov chain Monte Carlo}
\kwd{uniformly ergodic Markov chain}
\kwd{discrepancy theory}
\kwd{probabilistic method}
\end{keyword}

\end{frontmatter}

\section{Introduction}

Markov chain Monte Carlo (MCMC) algorithms are used for the approximation of
an  expected value with respect to the stationary probability measure $\pi$ of the chain.
This is done by simulating a Markov chain $(X_i)_{i\ge 1}$
and using the sample average $ \frac{1}{n} \sum_{i=1}^{n} f(X_i)$ to estimate the mean
$\mathbb{E}_{\pi}(f) := \int_{G} f(x) \pi(\rd x)$,
where $G$ is the state space and $f$ is a real-valued function defined on $G$.
This method is a staple tool in the physical sciences and Bayesian statistics.

A single transition from $X_{i-1}$ to $X_i$  of a Markov chain is generated by using
the current state $X_{i-1}$ and a random source $U_i$, usually taken from an i.i.d.
$\mathcal{U}(0,1)$ sequence $(U_i)_{i\ge 1}$ of random numbers.
In contrast, the Markov chain quasi-Monte Carlo idea is as follows:
Substitute the sequence of random numbers by a deterministically constructed finite sequence of
numbers $(u_i)_{1 \le i \le n}$ in $[0,1]^s$ for all $n \in \mathbb{N}$.
Numerical experiments suggest that for judiciously chosen deterministic pseudo-random numbers
$(u_i)_{1 \le i \le n}$ this can lead to significant improvements. Owen and Tribble \cite{OwTr05} and Tribble~\cite{Tr07} report an improvement by a factor of up to $10^3$ and a faster convergence rate for a Gibbs sampling problem. There were also previous attempts which provided evidence that the approach leads to comparable results \cite{LeSi06,Liao,So74}.
Another line of research, dealing with the so-called array-RQMC method, also combines MCMC with quasi-Monte Carlo \cite{LeLeTu08}. For a thorough literature review we refer to \cite[Subsection 1.1 (Literature review)]{ChDiOw11}.

Recently in the work of Chen, Dick and Owen \cite{ChDiOw11} and Chen~\cite{Ch11},
the first theoretical justification of the Markov chain quasi-Monte Carlo
approach on continuous state spaces was provided.
Therein a consistency result is proven if the random sequence $(U_i)_{i\ge 1}$ is substituted
by a deterministic `completely uniformly distributed' (CUD) sequence $(u_i)_{i\ge 1}$, the Markov chain satisfies a contraction assumption and the integrand $f$ is continuous. For a precise definition of CUD sequences we refer to \cite{ChDiOw11,ChMaNiOw12} and for
the construction of weakly CUD sequences we refer to \cite{TrOw08}.
The consistency result of Markov chain quasi-Monte Carlo
corresponds to an ergodic theorem for Markov chain Monte Carlo and can be shown to be equivalent to the statement that the discrepancy between the empirical distribution and target distribution converges to $0$ (this follows directly from \cite[Theorem~1]{ChDiOw11}). However, from the result in \cite{ChDiOw11} it is not clear
how fast the sample average converges to the desired expectation.
The goal of this paper is to investigate the convergence behavior
of such Markov chain quasi-Monte Carlo algorithms.
We describe the setting and main results in the following.

Throughout the paper we deal with uniformly ergodic Markov chains on a state
space $G \subseteq \mathbb{R}^d$ and a probability space $(G, \mathcal{B}(G), \pi)$,
where $\mathcal{B}(G)$ is the Borel $\sigma$-algebra defined on $G$ and $\pi$ is the
stationary distribution of the Markov chain, for details see for example \cite{MeTw09, RoCa04, RoRo04}. We assume that the Markov chain can be generated by an update function $\varphi:G \times [0,1]^s \to G$, that is, $X_i = \varphi(X_{i-1}; U_i)$ for all $i \ge 1$.
We fix a starting point $x_0 = x$ and replace the random numbers $(U_i)_{i \ge 1}$
by a deterministic sequence $(u_i)_{i\ge 1}$ to generate the deterministic points
$x_i = \varphi(x_{i-1}; u_i)$ for $i \ge 1$. The convergence behavior of the Markov chain
is measured using a generalized Kolmogorov-Smirnov test between the stationary distribution
$\pi$ and the empirical distribution $\widehat{\pi}_n(A) := \frac{1}{n} \sum_{i=1}^n 1_{x_i \in A}$,  where $1_{x_i \in A}$ is the indicator function
of the set $A \in \mathcal{B}(G)$. The discrepancy is defined by taking the supremum of $|\pi(A) - \widehat{\pi}_n(A)|$ over all sets in $\mathscr{A} \subseteq \mathcal{B}(G)$
(since the empirical distribution is based on a finite number of points in $G$
we generally have $\mathscr{A} \neq \mathcal{B}(G)$, see below for a more detailed description).
Under these assumptions we prove that, for each $n \in \mathbb{N}$, there exists a finite sequence of numbers $(u_i)_{1 \le i \le n}$ such that this discrepancy converges with order $\mathcal{O}(n^{-1/2} (\log n)^{1/2})$ as $n$ tends to infinity.
This is roughly the convergence rate which one would expect from MCMC algorithms based on random inputs.

A drawback of our results is that we are currently not able to give explicit constructions of sequences $(u_i)_{1 \le i \le n}$ for which our discrepancy bounds hold. This is because our proofs make essential use of probabilistic arguments. Namely, we use a
Hoeffding inequality by Glynn and Ormoneit~\cite{GlOr02}  and some results by  Talagrand~\cite{Ta94} on empirical processes and a result by Haussler~\cite{Ha95}. 
Roughly speaking, we use the Hoeffding inequality to show that the probability of all $(X_i)_{1\leq i\leq n}$ with small discrepancy is bigger than $0$,  which implies the existence of a Markov chain with small discrepancy.
We do, however, give a criterion (which we call `pull-back discrepancy') which the numbers $(u_i)_{1 \le i \le n}$ need to satisfy
such that the point set $(x_i)_{1 \le i \le n}$ has small discrepancy.
This is done by showing that the discrepancy of $(x_i)_{1\le i \le n}$ is
close to the pull-back discrepancy of the driver sequence $(u_i)_{1\le i \le n}$.
This should eventually lead to explicit constructions of suitable driver sequences.
As a corollary to the relation between the discrepancy of the Markov chain and the
pull-back discrepancy of the driver sequence, we obtain a Koksma-Hlawka inequality
for Markov chains in terms of the discrepancy of the driver sequence. We point out that the 
pull-back discrepancy generally differs from the CUD property studied in \cite{Ch11} and \cite{ChDiOw11}. Convergence rates beyond the usual Monte Carlo rate of $n^{-1/2}$ have previously  been shown for Array-RQMC \cite{LeLeTu08} and  in \cite[Chapter~6]{Ch11}. In both of these instances,  a direct simulation is (at least in principle) possible.

Our results on the discrepancy of the points $(x_i)_{1 \le i \le n}$ can also be understood as an extension of results on point distributions in the unit cube $[0,1]^s$, see \cite{HeNoWaWo01}, to uniformly ergodic Markov chains.

We give a brief outline of our work.
In the next section we provide background information on uniformly ergodic Markov chains,
give a relation between the transition kernel of a Markov chain
and their update function and state some examples which satisfy
the convergence properties.
We also give some background on discrepancy and describe our results in more detail. In Section~\ref{sec_discr} we provide the notion of discrepancy with respect to the driver sequence and we prove the close relation between the two types of discrepancy for uniformly ergodic Markov chains from which we deduce a Koksma-Hlawka type inequality. In Section~\ref{sec_exist} we prove the main results.
The appendix contains sections on $\delta$-covers, the integration error and some technical proofs.

\section{Background and notation}

In this section we provide the necessary background on discrepancy
and uniformly ergodic Markov chains.

\subsection{Discrepancy}

The convergence behavior of the Markov chain is analyzed with respect to a distance measure between the empirical distribution of the Markov chain and its stationary distribution $\pi$. It can be viewed as an extension of the Kolmogorov-Smirnov test and is a well established concept in numerical analysis and number theory~\cite{DiPi10}. We analyze the empirical distribution of the first $n$ points of the Markov chain $X_1,\ldots, X_n$ by assigning each point the same weight and defining the empirical measure of a set $A \in \mathcal{B}(G)$ by
\begin{equation*}
\widehat{\pi}_n(A) = \frac{1}{n} \sum_{i=1}^n 1_{X_i \in A},
\end{equation*}
where the indicator function is given by
$$1_{X_i \in A} = \left\{\begin{array}{rl} 1 & \mbox{if } X_i \in A, \\ 0 & \mbox{otherwise}. \end{array} \right.$$ The local discrepancy between the empirical distribution and the stationary distribution is then
\begin{equation*}
\Delta_{n,A} = \widehat{\pi}_n(A) - \pi(A).
\end{equation*}
To obtain a measure for the discrepancy
we take the supremum of $|\Delta_{n,A}|$ over certain sets $A$.
Note that since the empirical measure uses only a finite number
of points the local discrepancy $\Delta_{n,A}$ does not converge to $0$
in general if we take the supremum over all sets in $\mathcal{B}(G)$.
Thus we restrict the supremum to a set of so-called test sets $\mathscr{A} \subseteq \mathcal{B}(G)$.
Now we define the discrepancy.
\begin{definition}[Discrepancy]
 The discrepancy of $P_n = \{X_1,\ldots, X_n\} \subseteq G$ is given by
\begin{equation*}
D^\ast_{\mathscr{A}, \pi}(P_n) = \sup_{A \in \mathscr{A}} |\Delta_{n,A}|.
\end{equation*}
\end{definition}
This is the measure which we use to analyze the convergence behavior of the Markov chain as $n$ goes to $\infty$.

In Appendix~\ref{sec_int_error} we provide a relationship between the discrepancy
$D^\ast_{\mathscr{A}, \pi}(P_n)$ and the integration error of functions
in a certain function space $H_1$, where the set of test sets is given by
$$\mathscr{A} = \{ (-\infty, x)_G : x \in \bar{\mathbb{R}}^d\},$$ with
$\bar{\mathbb{R}}^d= (\mathbb{R}\cup \{\infty,-\infty\})^d$ and
$(-\infty, x)_G := (-\infty, x) \cap G = \prod_{j=1}^d (-\infty, \xi_j)  \cap G$ for
$x = (\xi_1,\ldots, \xi_d)$.
In particular, if there is at least one $i$ with $\xi_i = -\infty$, then $(-\infty, x)_G = \emptyset$,
whereas if all $\xi_i = \infty$, then $(-\infty, x)_G = G \subseteq \mathbb{R}^d$. For functions $f \in H_1$ we have
\begin{equation*}
\left|\mathbb{E}_\pi(f) - \frac{1}{n} \sum_{i=1}^n f(X_i) \right| \le D^\ast_{\mathscr{A}, \pi}(P_n) \|f\|_{H_1}.
\end{equation*}
Inequalities of this form are called Koksma-Hlawka inequalities, see \cite[Chapter~2]{DiPi10} for more information. See Appendix~\ref{sec_int_error} for details on the definition of the space $H_1$ and the proof of the inequality.

\subsection{Markov chains}\label{sec_mc}

The main assumption on the Markov chain in \cite{Ch11}
and \cite{ChDiOw11} is the existence of a coupling region, or in a weakened version,
a contraction assumption on the update function. Roughly speaking, this means that if
one starts two Markov chains at different starting points but uses the same random numbers as updates,
then the points of the chain coincide or move closer to each other as the chain progresses.
In this paper, we replace this assumption by the assumption that the Markov chain is uniformly ergodic.
The concept of uniform ergodicity is much closer to the concept of discrepancy,
which allows us to obtain stronger results than previous attempts.
We introduce uniformly ergodic Markov chains in the following.

Let $G \subseteq \mathbb{R}^d$ and let $\mathcal{B}(G)$ denote the Borel $\sigma$-algebra of $G$.
In the following we provide the definition of a transition kernel.
\begin{definition}
 The function $K\colon G\times \mathcal{B}(G) \to [0,1]$ is called transition kernel if
 \begin{enumerate}[(i)]
  \item for each $x\in G$ the mapping $A\in\mathcal{B}(G) \mapsto K(x,A)$ is a probability measure on $(G,\mathcal{B}(G))$, and
  \item for each $A\in\mathcal{B}(G)$ the mapping $x\in G \mapsto K(x,A)$ is a $\mathcal{B}(G)$-measurable real-valued function.
 \end{enumerate}
\end{definition}

Let $K:G \times \mathcal{B}(G) \to [0,1]$ be a transition kernel.
We assume that $\pi$ is the unique stationary distribution of the transition kernel $K$, i.e.
\[
 \int_G K(x,A) \pi(\rd x) = \pi(A),\qquad \forall A\in \mathcal{B}(G).
\]
The transition kernel $K$ gives rise to a Markov chain $X_0, X_1, X_2, \ldots \in G$
in the following way. Let $X_0=x$ with $x\in G$ and $i\in\NN$.
Then, for a given $X_{i-1}$, we choose $X_i$ with distribution
$K(X_{i-1}, \cdot)$, that is, for all $A \in \mathcal{B}(G)$,
the probability that $X_i \in A$ is given by $K(X_{i-1}, A)$.

\begin{definition}[Total variation distance]
The total variation distance between
the transition kernel $K(x,\cdot)$ and the stationary distribution $\pi$ is defined by
\[
 \norm{K^j(x,\cdot)-\pi}_{\text{\rm tv}} = \sup_{A\in \mathcal{B}(G) } \abs{K^j(x,A)-\pi(A)}.
\]
\end{definition}

Note that with $K^0(x,A)=1_{x\in A}$ we have
\[
 K^j(x,A) = \int_G K(y,A)\,K^{j-1}(x,\rd y)
          =  \int_G K^{j-1}(y,A)\,K(x,\rd y).
\]

\begin{definition}[Uniform ergodicity]
Let $\a \in [0,1)$ and $M\in(0,\infty)$. The transition kernel $K$
is uniformly ergodic with $(\a,M)$ iff for any $x\in G$ and $j\in\NN$ we have
\[
 \norm{K^j(x,\cdot)-\pi}_{\text{\rm tv}} \leq \a^j M.
\]
A Markov chain with transition kernel $K$ is called uniformly ergodic if there exists an
$\a \in [0,1)$ and $M\in(0,\infty)$, such that the transition kernel is uniformly ergodic with $(\a,M)$.
\end{definition}
\begin{remark}
The uniform ergodicity is necessary to apply the Hoeffding inequality \cite{GlOr02}
and it is also used in the estimates of the discrepancy.

 For Markov chains
 that are
 geometrically ergodic or have a spectral gap,
 for definitions see \cite{RoRo97,RoRo04,Ru12},
 one must use other concentration inequalities.
  The papers \cite{Ad08,Mi12,Pa12}
 might be useful.
However, in those cases we are not aware of
  results which allow us to treat Markov chains
  which start deterministically, as we consider in this paper.
 \end{remark}

Let us state a result which provides an equivalent statement to uniform ergodicity.
Let $L_\infty$ be the set of all bounded functions $f\colon G\to \R$.
Then define the operator $P^j\colon L_\infty \to L_\infty$ by
\[
 P^j f(x) = \int_G f(y) \,K^j(x,\rd y),
\]
and the expectation with respect to $\pi$ is denoted by $\mathbb{E}_\pi(f)=\int_G f(y) \pi(\rd x)$.
The following result is well known, for a proof of this fact see for example \cite[Proposition~3.23. p.~48]{Ru12}.
\begin{proposition}  \label{prop: L_infty_erg}
   Let $\a \in[0,1)$ and $M\in(0,\infty)$.
   Then the following statements are equivalent:
    \begin{enumerate}[(i)]
  \item The transition kernel $K$ is uniformly ergodic with $(\a,M)$.
  \item The operator $P^j-\mathbb{E}_\pi$ satisfies
        \[
         \norm{P^j-\mathbb{E}_\pi}_{L_\infty \to L_\infty} \leq 2M\a^j , \quad j\in\NN.
        \]
  \end{enumerate}
\end{proposition}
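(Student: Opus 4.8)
The plan is to reduce both implications to a single identity, namely
\[
\norm{P^j - \mathbb{E}_\pi}_{L_\infty \to L_\infty} = 2 \sup_{x \in G} \norm{K^j(x,\cdot) - \pi}_{\text{\rm tv}},
\]
after which the equivalence of (i) and (ii) is immediate: statement (i) asserts that the right-hand supremum is bounded by $\a^j M$, statement (ii) asserts that the left-hand side is bounded by $2M\a^j$, and once the displayed identity is available these two bounds are equivalent, uniformly in $j\in\NN$.

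First I would unfold the definitions. Writing $\nu_x := K^j(x,\cdot) - \pi$, which is a finite signed measure with $\nu_x(G) = 0$ since both $K^j(x,\cdot)$ and $\pi$ are probability measures, we have $(P^j - \mathbb{E}_\pi)f(x) = \int_G f(y)\, \nu_x(\rd y)$. Hence
\[
\norm{P^j - \mathbb{E}_\pi}_{L_\infty \to L_\infty} = \sup_{\norm{f}_\infty \le 1}\, \sup_{x \in G} \abs{\int_G f(y)\, \nu_x(\rd y)} = \sup_{x \in G}\, \sup_{\norm{f}_\infty \le 1} \abs{\int_G f\, \rd\nu_x},
\]
where the interchange of the two suprema is harmless (it is a supremum over a product set). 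It therefore suffices to prove, for a fixed finite signed measure $\nu$ with $\nu(G) = 0$, the duality
\[
\sup_{\norm{f}_\infty \le 1} \abs{\int_G f\, \rd\nu} = 2 \sup_{A \in \mathcal{B}(G)} \abs{\nu(A)},
\]
since substituting this back and taking the supremum over $x\in G$ yields the claimed identity.

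The core of the argument is this duality, which I would establish via the Hahn--Jordan decomposition $\nu = \nu^+ - \nu^-$ with Hahn sets $G = P \cup N$. For the upper bound, $\abs{\int f\, \rd\nu} \le \int \abs{f}\, \rd\abs{\nu} \le \abs{\nu}(G) = \nu^+(G) + \nu^-(G)$ whenever $\norm{f}_\infty \le 1$. For the matching lower bound, the choice $f = 1_P - 1_N$ is admissible (it is bounded and measurable with $\norm{f}_\infty \le 1$) and yields $\int f\, \rd\nu = \nu^+(G) + \nu^-(G) = \abs{\nu}(G)$, so the left-hand supremum equals $\abs{\nu}(G)$. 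Finally, because $\nu(G) = 0$ forces $\nu^+(G) = \nu^-(G)$, and because $\sup_A \abs{\nu(A)}$ is attained at $A = P$ (giving $\nu^+(G)$) or at $A = N$ (giving $\nu^-(G)$), one obtains $\sup_A \abs{\nu(A)} = \nu^+(G) = \tfrac12 \abs{\nu}(G)$. Combining the last two facts produces the factor of $2$.

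The only genuine obstacle is the measure-theoretic duality of the previous paragraph; everything else is bookkeeping. In particular one must invoke the existence of the Hahn decomposition, valid since $\nu$ is a finite signed measure, and then observe that the condition $\nu(G) = 0$ is exactly what converts the total mass $\abs{\nu}(G)$ into twice the set-supremum $\sup_A \abs{\nu(A)}$. Since every bound above is uniform in $x\in G$, taking suprema and inserting the duality into the second display completes the identity, from which (i) $\Leftrightarrow$ (ii) follows at once.
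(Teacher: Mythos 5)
Your proof is correct. Note that the paper itself gives no proof of this proposition---it only cites \cite[Proposition~3.23, p.~48]{Ru12}---so there is no internal argument to compare against; your write-up supplies exactly the missing details. The identity you establish,
\[
\norm{P^j-\mathbb{E}_\pi}_{L_\infty\to L_\infty} \;=\; 2\,\sup_{x\in G}\norm{K^j(x,\cdot)-\pi}_{\text{\rm tv}},
\]
via the Hahn--Jordan decomposition and the duality $\sup_{\norm{f}_\infty\le 1}\abs{\int f\,\rd\nu}=\abs{\nu}(G)=2\sup_{A}\abs{\nu(A)}$ for signed measures with $\nu(G)=0$, is precisely the content of the cited result and is the standard route; it in fact proves more than the stated equivalence, since it gives the equality of the two quantities rather than merely the two-sided implication between the bounds.
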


In the following we introduce
update functions $\varphi$ for a given transition kernel and state some examples.
\begin{definition}[Update function]
Let $\varphi:G \times [0,1]^s \to G$
be a measurable
function and
\begin{align*}
B : G \times \mathcal{B}(G) & \to \mathcal{B}([0,1]^s), \\
B(x,A) & = \{u \in [0,1]^s: \varphi(x;u) \in A\},
\end{align*}
where $\mathcal{B}([0,1]^s)$ is the Borel $\sigma$-algebra of $[0,1]^s$. Let $\lambda_s$ denote the Lebesgue measure on $\mathbb{R}^s$.  Then the function $\varphi$ is an update function for the transition kernel $K$ iff
\begin{equation}\label{eq_update_prop}
K(x,A) = \P(\varphi(x;U)\in A) = \lambda_s(B(x,A)),
\end{equation}
where $\P$ is the probability measure for the uniform distribution in $[0,1]^s$.
\end{definition}

\begin{example}(Direct simulation) \label{ex: simple_rem}
Let us assume that we can sample with respect to $\pi$, i.e. $K(x,A) =\pi(A)$ for all $x \in G$.
For the moment let $G=[0,1]^s$ and let $\pi$ be the uniform distribution on $G$.
In this case we can choose the simple update function $\varphi(x; u) = u$,
since then
\begin{equation*}
\pi(A)
= \lambda_s(B(x,A)) \quad \mbox{for all } x \in G.
\end{equation*}
If $G$ is a general subset of $\R^d$ and $\pi$ is a general probability measure,
then we need a generator, see \cite{ChDiOw11}. A generator
is a special update function $\psi\colon [0,1]^s \to G$ such that
\[
  \pi(A)=\P(\psi(U)\in A), \quad \mbox{for all } A\in\mathcal{B}(G).
\]
Note that the transition kernel $K(x,A) =\pi(A)$ is uniformly ergodic with $(\a,M)$
for $\a=0$ and $M\in(0,\infty)$.
 \end{example}

\begin{example}(Hit-and-run algorithm) \label{ex: har}
Let $G\subset \R^d$ be a compact convex body and $\pi$ be the uniform distribution on $G$.
Let $\mathbb{S}^{d-1}=\{x\in \R^d\colon \norm{x}_2= \langle x , x \rangle^{1/2}=1\}$
be the $d-1$-dimensional sphere, where
$\langle x , y \rangle$ denotes the standard inner product in $\R^d$.
Let $\theta \in \mathbb{S}^{d-1}$ and let $L(x,\theta)$ be the chord in $G$ through $x$ and $x+\theta$,
i.e.
\[
 L(x,\theta) = \{ x+s\theta\in \R^d \mid s\in\R \} \cap G.
\]
We assume that we have an oracle which gives us $a(x,\theta),b(x,\theta) \in G$,  such that
\[
 [a(x,\theta),b(x,\theta)]= L(x,\theta) ,
\]
 where $[a(x,\theta),b(x,\theta)]=\{  \lambda a(x,\theta)+(1-\lambda) b(x,\theta)\colon \lambda \in[0,1]\}$.
 A transition of the hit and run algorithm works as follows. First, choose a random direction $\theta$. Then we sample the next state on $[a(x,\theta),b(x,\theta)]$ uniformly.
 Let $\psi\colon [0,1]^{d-1} \to \mathbb{S}^{d-1}$ be a generator for the uniform distribution on the sphere,
see for instance \cite{Fang_Wang}. Then we can choose for $x\in G$ and $u = (\upsilon_1, \upsilon_2, \ldots, \upsilon_{d}) \in[0,1]^{d}$
 the update function
 \[
  \varphi(x,u) = \upsilon_{d} \, a(x,\psi(\upsilon_1\dots,\upsilon_{d-1})) + (1-\upsilon_{d})\, b(x,\psi(\upsilon_1,\dots, \upsilon_{d-1})).
 \]
 In \cite{Sm84} it is shown that there exists an $\a\in[0,1)$ and an $M\in(0,\infty)$, such that
 the hit-and-run algorithm is uniformly ergodic with $(\a,M)$.
\end{example}

\begin{example}(Independence Metropolis sampler) \label{ex: ind_metro_sampler}
 Let $G=[0,1]^d$, assume that a parameter $\beta>0$
 and a function $H\colon [0,1]^d \to \R$ are given.
 Let $\pi_\beta$ be a probability measure on $G$, given by
 \[
  \pi_\beta (A)
  = \frac{1}{Z_\beta} \int_A \exp(-\beta H(x))\,\rd x, \qquad A\in \mathcal{B}(G),
 \]
 with normalizing constant $Z_\beta = \int_G \exp(-\beta H(y))\, \rd y$.
 Here $\pi_\beta$ might be interpreted as Boltzmann distribution with
 inverse temperature $\beta$ and Hamiltonian $H$.
 Now let $\bar u \in [0,1]^d$ and
  \[
   A(x;\bar u)= \min\left\{ 1 ,
   \exp(-\beta(H(\bar u)-H(x)))
   \right\}
  \]
  be the acceptance probability of the Metropolis transition.
  Then we can choose for $x\in [0,1]^d$
  and $u = (\upsilon_1, \upsilon_2, \ldots, \upsilon_{d+1}) \in[0,1]^{d+1}$
  the update function
  \[
   \varphi(x;u) = \begin{cases}
                    (\upsilon_1,\dots,\upsilon_d) & \upsilon_{d+1} \leq A(x;\upsilon_1,\dots, \upsilon_d), \\
                    x, &   \upsilon_{d+1} > A(x;\upsilon_1,\dots, \upsilon_d).
                  \end{cases}
  \]
In \cite[Theorem~2.1., p.~105]{MeTw96} it is proven that, if there is a number
$\gamma>0$ such that $\exp(\beta \inf_{x\in G} H(x)) \geq \gamma$, then
the independent Metropolis sampler is uniformly ergodic with $(1-\gamma,1)$.
A local proposal Metropolis algorithm can also be uniformly ergodic,
see for example \cite{MaNo07}.
 \end{example}

Note that the arguments of Example~\ref{ex: ind_metro_sampler} 
are also valid for a heat-bath sampler.
Let us briefly add some more examples.
 The slice sampler, for details with respect to the algorithm and update functions see \cite{Ne03},
 is under additional assumptions uniformly ergodic, see \cite{MiTi02}.
 Furthermore, the Gibbs sampler for sampling the uniform distribution is uniformly ergodic if the boundary of $G$
 is smooth enough, see \cite{RoRo98}.\\

Above we defined the set $B(x,A)$, which is for $x\in G$ and $A\in \mathcal{B}(G)$
the set of random numbers $u$ which takes $x$ into the set $A$ using the update function $\varphi$
with arguments $x$ and $u$. We now define sets of random numbers which take $x$ to $A$ in $i\in\NN$ steps. Let $\varphi_1(x;u) = \varphi(x;u)$ and for $i > 1$ let
\begin{align*}
\varphi_i & : G \times [0,1]^{is} \to G, \\
\varphi_i(x; u_1, u_2, \ldots, u_i) & = \varphi(\varphi_{i-1}(x; u_1, u_2,\ldots, u_{i-1}); u_i),
\end{align*}
that is, $\varphi_i(x; u_1, u_2,\ldots, u_i) \in G$ is the point obtained via $i$ updates using $u_1, u_2, \ldots, u_i \in [0,1]^s$ where the starting point is $x \in G$.

\begin{lemma}  \label{lem: update_fct}
 Let $i,j\in\NN$ and $i\geq j$. For any $u_1,\dots,u_i \in [0,1]^s$ and $x\in G$ we have
 \begin{equation} \label{eq: iterat_update_fct}
    \varphi_i(x;u_1,\dots,u_i) = \varphi_{i-j}(\varphi_j(x;u_1,\dots,u_j);u_{j+1},\dots,u_i).
  \end{equation}
\end{lemma}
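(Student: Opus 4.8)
The plan is to prove the identity
\[
\varphi_i(x;u_1,\dots,u_i) = \varphi_{i-j}(\varphi_j(x;u_1,\dots,u_j);u_{j+1},\dots,u_i)
\]
by induction. The cleanest induction variable is $j$: I will fix $i$ and the points $u_1,\dots,u_i,x$ throughout, and induct on $j$ from $0$ (or $1$) up to $i$. The recursive definition of $\varphi_i$ peels off one update at a time from the \emph{end} of the argument list, so the natural fit is to match this against peeling one update at a time from the \emph{front} via the inner $\varphi_j$. I expect the main conceptual step is simply to align these two directions of recursion correctly; there is no deep obstacle, only the bookkeeping of indices.

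First I would dispose of the base case. For $j=1$ the claim reads
\[
\varphi_i(x;u_1,\dots,u_i) = \varphi_{i-1}(\varphi_1(x;u_1);u_{2},\dots,u_i) = \varphi_{i-1}(\varphi(x;u_1);u_2,\dots,u_i),
\]
which I would verify by iterating the defining recursion $\varphi_i(x;u_1,\dots,u_i)=\varphi(\varphi_{i-1}(x;u_1,\dots,u_{i-1});u_i)$ down to the innermost application, observing that the very first update applied is $\varphi(x;u_1)=\varphi_1(x;u_1)$. (If one prefers, the case $j=0$ is trivially true with the convention $\varphi_0(x)=x$.) Here the only subtlety is confirming that the front-most update in the fully unwound expression for $\varphi_i$ is indeed the one using $u_1$, which follows directly from the definition of $\varphi_{i-1},\dots,\varphi_1$.

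For the inductive step I would assume the identity holds for some $j$ with $1\le j<i$ and prove it for $j+1$. Starting from the induction hypothesis
\[
\varphi_i(x;u_1,\dots,u_i) = \varphi_{i-j}(\varphi_j(x;u_1,\dots,u_j);u_{j+1},\dots,u_i),
\]
I would apply the already-established base-case form of the identity (the $j=1$ case, i.e.\ the defining recursion read from the front) to the outer function $\varphi_{i-j}$, peeling off its first update $u_{j+1}$ acting on the point $y:=\varphi_j(x;u_1,\dots,u_j)$. This gives
\[
\varphi_{i-j}(y;u_{j+1},\dots,u_i) = \varphi_{i-j-1}(\varphi(y;u_{j+1});u_{j+2},\dots,u_i).
\]
Finally I would recognize $\varphi(y;u_{j+1}) = \varphi(\varphi_j(x;u_1,\dots,u_j);u_{j+1}) = \varphi_{j+1}(x;u_1,\dots,u_{j+1})$ by the definition of $\varphi_{j+1}$, and $i-j-1 = i-(j+1)$, so the right-hand side becomes $\varphi_{i-(j+1)}(\varphi_{j+1}(x;u_1,\dots,u_{j+1});u_{j+2},\dots,u_i)$, which is exactly the claim for $j+1$. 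This closes the induction and establishes the identity for all $j\le i$. The entire argument is elementary; the only thing to watch is keeping the index ranges and the split point consistent at each step.
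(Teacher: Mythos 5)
Your proof is correct. The paper disposes of this lemma in one line---``the assertion can be proven by induction over $i$''---so any complete induction is in its spirit, but your choice of induction variable is genuinely different and has a structural consequence worth noting. Inducting on $i$ for fixed $j$ (the paper's route) closes in a single pass: the inductive step writes $\varphi_i(x;u_1,\dots,u_i)=\varphi(\varphi_{i-1}(x;u_1,\dots,u_{i-1});u_i)$, applies the hypothesis to $\varphi_{i-1}$, and reassembles using the definition of $\varphi_{i-j}$, which peels updates from the \emph{end}---exactly the direction the recursive definition provides. Your induction on $j$ instead needs, both in its base case and inside its inductive step, the ``front-peeling'' identity $\varphi_m(y;v_1,\dots,v_m)=\varphi_{m-1}(\varphi(y;v_1);v_2,\dots,v_m)$, which is not the definition but the $j=1$ instance of the lemma itself; as you implicitly acknowledge, making the ``unwinding'' of the recursion rigorous is its own induction (on $i$, or on $m$). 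So your argument is really two nested inductions: first establish the $j=1$ case for all $i$, then induct on $j$, where the step from $j$ to $j+1$ becomes a clean one-liner. That is perfectly valid and arguably isolates the one nontrivial fact (that end-recursion and front-recursion define the same composite map), but it is slightly less economical than inducting on $i$ directly. If you write it up, state the $j=1$ case as a separate claim and prove it by induction rather than by appeal to the fully unwound expression, so that no step rests on an ellipsis.
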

\begin{proof}
 The assertion can be proven by induction over $i$.
\end{proof}

For $i \ge 1$ let
\begin{align*}
B_i & : G \times \mathcal{B}(G) \to \mathcal{B}([0,1]^{i s}), \\
B_i(x,A) & = \{(u_1, u_2, \ldots, u_i) \in [0,1]^{i s}: \varphi_i(x; u_1, u_2,\ldots, u_i) \in A\}.
\end{align*}
We therefore have $B_1(x,A) = B(x,A)$. Note that $B_i(x,A) \subseteq [0,1]^{is}$.
The next lemma is important to understand the relation
between the update function and the transition kernel.
\begin{lemma}  \label{lem: same_expect}
Let $\varphi$ be an update function for the transition kernel $K$. Let $n\in \NN$
and $F\colon G^n \to \R$. The expectation with respect to the joint distribution of
$X_1,\dots,X_n$ from the Markov chain starting at $x_0\in G$ is given by
\[
 \EE_{x_0,K}(F(X_1,\dots,X_n))
= \underbrace{\int_G \dots \int_G}_{n\text{-times}}
F(x_1,\dots,x_n)\, K(x_{n-1},\rd x_n) \dots K(x_0,\rd x_1).
\]
Then
\begin{equation} \label{eq: same_expect}
 \begin{split}
&   \EE_{x_0,K}(F(X_1,\dots,X_n)) \\
&  \qquad =\int_{[0,1]^{ns}} F(\varphi_1(x_0,u_1),\dots,\varphi_n(x_0,u_1,\dots,u_n)) \rd u_1\dots \rd u_n,
  \end{split}
\end{equation}
whenever one of the integrals exist.
\end{lemma}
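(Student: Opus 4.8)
The plan is to prove \eqref{eq: same_expect} by induction on $n$, using the defining property \eqref{eq_update_prop} of the update function for the base case and Lemma~\ref{lem: update_fct} for the inductive step. For $n=1$ the point is that \eqref{eq_update_prop} says precisely that $K(x_0,\cdot)$ is the image (pushforward) of Lebesgue measure $\lambda_s$ on $[0,1]^s$ under the map $u\mapsto\varphi(x_0,u)$: rewriting $K(x_0,A)=\lambda_s(B(x_0,A))=\int_{[0,1]^s}1_{\varphi(x_0,u)\in A}\,\rd u$ establishes the identity $\int_G F(x_1)\,K(x_0,\rd x_1)=\int_{[0,1]^s}F(\varphi(x_0,u_1))\,\rd u_1$ for $F$ of the form $F(y)=1_{y\in A}$. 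I would then climb the standard measure-theoretic ladder---linearity for simple $F$, monotone convergence for nonnegative measurable $F$, and the decomposition $F=F^+-F^-$ for general $F$---to obtain the base case whenever one of the integrals is finite.

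For the inductive step I would assume the claim for $n-1$ and peel off the innermost integration in the definition of $\EE_{x_0,K}(F(X_1,\dots,X_n))$. Integrating first against $K(x_{n-1},\rd x_n)$ and applying the base case pointwise in $x_{n-1}$ replaces $\int_G F(x_1,\dots,x_{n-1},x_n)\,K(x_{n-1},\rd x_n)$ by $\int_{[0,1]^s}F(x_1,\dots,x_{n-1},\varphi(x_{n-1},u_n))\,\rd u_n$. Writing the latter as $\widetilde F(x_1,\dots,x_{n-1})$, the surviving iterated kernel integral is exactly $\EE_{x_0,K}(\widetilde F(X_1,\dots,X_{n-1}))$, so the induction hypothesis applies and substitutes $\varphi_k(x_0,u_1,\dots,u_k)$ for each $x_k$. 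In particular $x_{n-1}$ becomes $\varphi_{n-1}(x_0,u_1,\dots,u_{n-1})$, whence $\varphi(x_{n-1},u_n)$ becomes $\varphi(\varphi_{n-1}(x_0,u_1,\dots,u_{n-1}),u_n)$, which by Lemma~\ref{lem: update_fct} (with $i=n$, $j=n-1$, using $\varphi_1=\varphi$) is just $\varphi_n(x_0,u_1,\dots,u_n)$. Merging the $(n-1)s$-fold integral and the final $s$-fold integral into a single integral over $[0,1]^{ns}$ then yields \eqref{eq: same_expect}.

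The step most in need of care is measurability together with the legitimacy of the interchange of integrals. After applying the base case inside the iterated integral I must check that $\widetilde F(x_1,\dots,x_{n-1})=\int_{[0,1]^s}F(x_1,\dots,x_{n-1},\varphi(x_{n-1},u_n))\,\rd u_n$ is genuinely a measurable function of $(x_1,\dots,x_{n-1})$, so that the induction hypothesis is applicable to it, and that Tonelli's theorem permits merging the two integrals over $[0,1]^{(n-1)s}$ and $[0,1]^s$. For nonnegative $F$ both facts come for free from Tonelli, and the general case is then reduced to the nonnegative one by the $F=F^+-F^-$ splitting---this is exactly where the hypothesis that one of the integrals exists is used. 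By contrast, the algebraic identity underlying the whole recursion, namely that applying the update one more time corresponds to appending one more coordinate block $u_n$, is handed to us directly by Lemma~\ref{lem: update_fct}, so no extra computation is required there.
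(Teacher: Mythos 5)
Your proof is correct and rests on the same two ingredients as the paper's own proof: the one-step pushforward identity $\int_G f(y)\,K(x,\rd y)=\int_{[0,1]^s} f(\varphi(x,u))\,\rd u$, which is immediate from \eqref{eq_update_prop}, and the composition property of $\varphi_i$ from Lemma~\ref{lem: update_fct}. The only difference is organizational: the paper iterates this identity starting from the first coordinate (converting $u_1$ into $x_1$ and working from the outside in), whereas you induct on $n$ by peeling off the innermost coordinate (converting $x_n$ into $u_n$), and your explicit attention to the measure-theoretic ladder, measurability of the partial integral, Tonelli, and the $F=F^+-F^-$ splitting simply makes rigorous the steps the paper leaves implicit.
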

Note that the right-hand-side of \eqref{eq: same_expect} is the expectation with respect
to the uniform distribution in $[0,1]^{ns}$.

\begin{proof}
First note that by the definition of
the update function we obtain for any $\pi$-integrable function $f\colon G \to \R$ that
\begin{equation}  \label{eq: trans_op_up_fct}
  \int_G f(y) K(x,\rd y) = \int_{[0,1]^s} f(\varphi(x,u)) \rd u.
\end{equation}
By the application of Lemma~\ref{lem: update_fct} and \eqref{eq: trans_op_up_fct} we obtain
\begin{align*}
   & \int_{[0,1]^{ns}} F(\varphi_1(x_0,u_1),\dots,\varphi_n(x_0,u_1,\dots,u_n)) \rd u_1\dots \rd u_n \\
 = & \int_{[0,1]^{(n-1)s}} \int_G F(x_1,\varphi_1(x_1,u_2),\dots,\varphi_{n-1}(x_1,u_2,\dots,u_n)) \times\\
& \qquad  \qquad \qquad \qquad \qquad \qquad \qquad \qquad \qquad
         K(x_0,\rd x_1) \rd u_2\dots \rd u_n.
\end{align*}
The iteration of this procedure leads to the assertion.
\end{proof}

\begin{corollary} \label{coro: stat_update}
Let $\varphi$ be an update function for the transition kernel $K$ and let $\pi$ be the stationary distribution of $K$. For any $i \in \mathbb{N}$
and $A \in \mathcal{B}(G)$ we have $K^i(x,A) = \lambda_{is}(B_i(x,A))$.
In particular
\begin{equation*}
\int_G \lambda_{is}(B_i(x, A))\, \pi(\rd x) = \pi(A).
\end{equation*}
\end{corollary}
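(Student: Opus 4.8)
The plan is to prove the two assertions of Corollary~\ref{coro: stat_update} in turn, deriving the second from the first. For the first claim, $K^i(x,A) = \lambda_{is}(B_i(x,A))$, the natural route is to apply Lemma~\ref{lem: same_expect} with a cleverly chosen test function $F$. Specifically, I would take $n=i$ and let $F\colon G^i \to \R$ depend only on the final coordinate, namely $F(x_1,\dots,x_i) = 1_{x_i \in A}$. With this choice the left-hand side of \eqref{eq: same_expect} becomes
\[
 \EE_{x_0,K}(1_{X_i \in A}) = \P(X_i \in A \mid X_0 = x) = K^i(x,A),
\]
which is just the $i$-step transition probability written as an expectation. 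Here I would use the definition of $K^i$ together with the iterated-integral formula for $\EE_{x_0,K}$ given in the statement of Lemma~\ref{lem: same_expect}, collapsing the inner integrations by the Chapman-Kolmogorov identity recorded after the total variation definition.

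On the right-hand side of \eqref{eq: same_expect}, the same $F$ yields
\[
 \int_{[0,1]^{is}} 1_{\varphi_i(x;u_1,\dots,u_i) \in A}\, \rd u_1 \dots \rd u_i,
\]
and the integrand is by definition the indicator of the event $(u_1,\dots,u_i) \in B_i(x,A)$. Hence the integral equals $\lambda_{is}(B_i(x,A))$, the Lebesgue measure of that set. Equating the two sides gives $K^i(x,A) = \lambda_{is}(B_i(x,A))$, completing the first part. The only point requiring a little care is the integrability hypothesis in Lemma~\ref{lem: same_expect}, but since $F$ is a bounded indicator function both integrals manifestly exist, so the lemma applies without obstruction.

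For the second assertion I would simply integrate the identity $K^i(x,A) = \lambda_{is}(B_i(x,A))$ against the stationary measure $\pi$ over $x \in G$. This gives
\[
 \int_G \lambda_{is}(B_i(x,A))\, \pi(\rd x) = \int_G K^i(x,A)\, \pi(\rd x),
\]
and the stationarity of $\pi$ under $K$ (hence under $K^i$, which follows by iterating the invariance relation $\int_G K(x,A)\,\pi(\rd x) = \pi(A)$) shows the right-hand side equals $\pi(A)$. I do not anticipate a genuine obstacle here; the entire corollary is a direct specialization of Lemma~\ref{lem: same_expect} plus invariance of $\pi$. The one step worth stating explicitly is the induction establishing $\int_G K^i(x,A)\,\pi(\rd x) = \pi(A)$ for all $i$ from the $i=1$ case, which follows from the Chapman-Kolmogorov decomposition $K^i(x,A) = \int_G K^{i-1}(y,A)\,K(x,\rd y)$ and Fubini.
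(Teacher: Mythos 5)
Your proposal is correct and follows essentially the same route as the paper: the paper's proof likewise sets $F(x_1,\dots,x_n) = 1_{x_i \in A}$, applies Lemma~\ref{lem: same_expect} to obtain $K^i(x,A) = \lambda_{is}(B_i(x,A))$, and then invokes the stationarity of $\pi$. You have merely spelled out the details (boundedness of the indicator, the Chapman--Kolmogorov iteration for invariance under $K^i$) that the paper leaves implicit.
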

\begin{proof}
 Set for $n\geq i$
 \[
  F(x_1,\dots,x_n) = 1_{x_i \in A}.
 \]
 Then by Lemma~\ref{lem: same_expect} we obtain $K^i(x,A) = \lambda_{is}(B_i(x,A))$
 and by the stationarity of $\pi$ the proof is complete.
\end{proof}

\section{On the discrepancies of the Markov chain and driver sequence}\label{sec_discr}

Recall that the star-discrepancy of a point set
$P_{n} = \{x_1, x_2, \ldots, x_n \} \subseteq G$ with respect to the distribution $\pi$
is given by
\begin{equation*}
D^\ast_{\mathscr{A}, \pi}(P_n)
= \sup_{A \in \mathscr{A}} \left|\frac{1}{n} \sum_{i=1}^n 1_{x_i \in A} - \pi(A)\right|.
\end{equation*}

Let us assume that $u_1,u_2,\ldots, u_n \in [0,1]^s$ is a finite deterministic sequence. We call this finite sequence driver sequence. Then let the set $P_n=\{x_1, x_2, \ldots, x_n \} \subseteq G$ be given by
\begin{equation} \label{eq: x_i_by_driver_seq}
  x_i =  \varphi(x_{i-1};u_i)  = \varphi_i(x_0;u_1,\dots,u_i), \quad i=1,\dots,n.
\end{equation}

We now define a discrepancy measure on the driver sequence. Below we show how this discrepancy is related to the discrepancy of the Markov chain.

\begin{definition}[Pull-back discrepancy]
Let $\mathcal{U}_n = (u_1, u_2,\ldots, u_n) \in [0,1]^{ns}$ and let $B_i$ be
defined as above. Define the local discrepancy function by
\[
\Delta^{\loc}_{n,A,\varphi}(x; u_1,\dots,u_n)
=\frac{1}{n}  \sum_{i=1}^n \left[1_{(u_1,\ldots, u_i) \in B_i(x,A)} - \lambda_{is}(B_i(x,A)) \right].
\]
Let $\mathscr{A} \subseteq \mathcal{B}(G)$ be a set of test sets. Then we define the discrepancy of the driver sequence by
\begin{equation*}
D^\ast_{\mathscr{A}, \varphi}(\mathcal{U}_n) = \sup_{A \in \mathscr{A}} \left|\Delta^{\loc}_{n,A,\varphi}(x; u_1,\dots,u_n) \right|.
\end{equation*}
We call $D^\ast_{\mathscr{A}, \varphi}(\mathcal{U}_n)$ the pull-back discrepancy.
\end{definition}

The discrepancy of the driver sequence $D^\ast_{\mathscr{A}, \varphi}(\mathcal{U}_n)$
is a `pull-back discrepancy' since the test sets $B_i(x,A)$ are derived
from the test sets $A \in \mathscr{A}$ from the discrepancy of the Markov chain $D^\ast_{\mathscr{A}, \pi}(P_n)$ via inverting the update function.

The following theorem provides an estimate of the star-discrepancy of $P_n$ with respect to
properties of the driver sequence and the transition kernel.

\begin{theorem} \label{thm: est_discr}
Let $K$ be a transition kernel defined on $G \subseteq \mathbb{R}^d$
with stationary distribution $\pi$.
Let $\varphi$ be an update function for $K$.
Let $x_0=x$ and let $u_1, u_2, \ldots, u_n \in [0,1]^s$ be the driver sequence, such
that $P_n$ is given by \eqref{eq: x_i_by_driver_seq}.
Let $\mathscr{A} \subseteq \mathcal{B}(G)$ be a set of test sets. Then
\begin{align*}
\abs{D^\ast_{\mathscr{A}, \pi}(P_n) - D^\ast_{\mathscr{A},\varphi}(\mathcal{U}_n) }
&  \le \sup_{A \in \mathscr{A}} \abs{  \frac{1}{n} \sum_{i=1}^n K^i(x,A) - \pi(A) }.
\end{align*}

\end{theorem}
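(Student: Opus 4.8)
The plan is to exploit the fact that the two local discrepancy functions $\Delta_{n,A}$ and $\Delta^{\loc}_{n,A,\varphi}(x;u_1,\dots,u_n)$ are built from a common empirical term that cancels when their difference is taken, leaving exactly the deterministic averaged-transition term appearing on the right-hand side. Everything reduces to two elementary identities followed by a careful passage to the supremum.

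First I would record the two identities that drive the argument. Since $x_i = \varphi_i(x_0;u_1,\dots,u_i)$ by \eqref{eq: x_i_by_driver_seq}, the very definition of the set $B_i(x,A)$ gives the pointwise equivalence $x_i \in A \iff (u_1,\dots,u_i) \in B_i(x,A)$, hence
\[
 1_{x_i \in A} = 1_{(u_1,\dots,u_i) \in B_i(x,A)}, \qquad i = 1,\dots,n.
\]
This is the conceptual heart of the \emph{push-back} construction: the test sets $B_i(x,A)$ are defined precisely so that membership of the driver sequence mirrors membership of the chain. Second, Corollary~\ref{coro: stat_update} supplies $K^i(x,A) = \lambda_{is}(B_i(x,A))$ for every $i \ge 1$ and every $A \in \mathscr{A}$.

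Next I would substitute both identities into the definition of the push-back local discrepancy to rewrite it purely in terms of the chain,
\[
 \Delta^{\loc}_{n,A,\varphi}(x;u_1,\dots,u_n) = \frac{1}{n}\sum_{i=1}^n 1_{x_i \in A} - \frac{1}{n}\sum_{i=1}^n K^i(x,A),
\]
whereas by definition $\Delta_{n,A} = \frac{1}{n}\sum_{i=1}^n 1_{x_i \in A} - \pi(A)$. Subtracting these, the common empirical term $\frac1n\sum_i 1_{x_i\in A}$ disappears and I am left, for each fixed $A \in \mathscr{A}$, with the clean identity
\[
 \Delta_{n,A} - \Delta^{\loc}_{n,A,\varphi}(x;u_1,\dots,u_n) = \frac{1}{n}\sum_{i=1}^n K^i(x,A) - \pi(A).
\]

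Finally I would pass to the suprema. Writing $c_A := \frac{1}{n}\sum_{i=1}^n K^i(x,A) - \pi(A)$, the identity above yields $\abs{\Delta_{n,A}} \le \abs{\Delta^{\loc}_{n,A,\varphi}} + \sup_{A'\in\mathscr{A}}\abs{c_{A'}}$ and, by symmetry, $\abs{\Delta^{\loc}_{n,A,\varphi}} \le \abs{\Delta_{n,A}} + \sup_{A'\in\mathscr{A}}\abs{c_{A'}}$ for every $A$. Taking the supremum over $A \in \mathscr{A}$ in each inequality gives the two bounds $D^\ast_{\mathscr{A},\pi}(P_n) - D^\ast_{\mathscr{A},\varphi}(\mathcal{U}_n) \le \sup_{A}\abs{c_A}$ and $D^\ast_{\mathscr{A},\varphi}(\mathcal{U}_n) - D^\ast_{\mathscr{A},\pi}(P_n) \le \sup_{A}\abs{c_A}$, which together are exactly the claim. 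No step presents a genuine obstacle; the only point requiring care is this last manipulation, where one must resist the temptation to write $\sup_A\abs{\Delta_{n,A}-\Delta^{\loc}_{n,A,\varphi}}$ as a difference of suprema, and instead use the uniform bound $\sup_A\abs{c_A}$ to separate the two suprema in both directions.
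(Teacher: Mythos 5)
Your proposal is correct and takes essentially the same route as the paper: the paper's proof also rests on the identification $1_{x_i \in A} = 1_{(u_1,\ldots,u_i)\in B_i(x,A)}$, the identity $\lambda_{is}(B_i(x,A)) = K^i(x,A)$ from Corollary~\ref{coro: stat_update}, and the insertion of $\pm K^i(x,A)$ followed by the triangle inequality and a symmetric argument for the reverse direction. Your packaging of the decomposition as the exact identity $\Delta_{n,A}-\Delta^{\loc}_{n,A,\varphi}(x;u_1,\dots,u_n)=\frac{1}{n}\sum_{i=1}^n K^i(x,A)-\pi(A)$ is merely a tidier presentation of the same argument.
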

\begin{proof}
For any $A \in \mathscr{A}$ we have
\begin{align*}
& \quad\, \left| \frac{1}{n} \sum_{i=1}^n 1_{x_i \in A} - \pi(A) \right| \\
&= \left| \frac{1}{n} \sum_{i=1}^n \left[ 1_{(u_1,\ldots, u_i) \in B_i(x,A)} - K^i(x,A) + K^i(x,A)- \pi(A) \right] \right| \\
 & \leq \abs{ \frac{1}{n} \sum_{i=1}^n \left[1_{(u_1,\ldots, u_i) \in B_i(x,A)} - \lambda_{is}(B_i(x,A))\right]}
+ \abs{ \frac{1}{n} \sum_{i=1}^n K^i(x,A) - \pi(A) }.
\end{align*}
 Note that we used $\lambda_{is}(B_i(x,A))=K^i(x,A)$
 which follows from Corollary~\ref{coro: stat_update}. Hence
\[
 D^\ast_{\mathscr{A}, \pi}(P_n)
  \le D^\ast_{\mathscr{A}, \varphi}(\mathcal{U}_n) + \sup_{A \in \mathscr{A}} \abs{  \frac{1}{n} \sum_{i=1}^n K^i(x,A) - \pi(A) }.
 \]
 The inequality
 \[
D^\ast_{\mathscr{A}, \varphi}(\mathcal{U}_n)
  \le D^\ast_{\mathscr{A}, \pi}(P_n)  +
\sup_{A \in \mathscr{A}} \abs{  \frac{1}{n} \sum_{i=1}^n K^i(x,A) - \pi(A) }
 \]
 follows by the same arguments.
\end{proof}

\begin{corollary}  \label{coro: D_U_almost_D_P}
 Let us assume that the conditions of Theorem~\ref{thm: est_discr} are satisfied.
 Further let $\a\in[0,1)$ and $M\in(0,\infty)$ and assume that the transition kernel is uniformly
 ergodic with $(\a,M)$. Then
 \begin{align*}
\left|D^\ast_{\mathscr{A}, \pi}(P_n) - D^\ast_{\mathscr{A}, \varphi}(\mathcal{U}_n) \right| \le &
\frac{\a M }{n(1-\a)}.
\end{align*}
\end{corollary}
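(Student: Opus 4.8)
The plan is to apply Theorem~\ref{thm: est_discr} and then bound the residual term $\sup_{A \in \mathscr{A}} \left| \frac{1}{n} \sum_{i=1}^n K^i(x,A) - \pi(A) \right|$ using the uniform ergodicity hypothesis. By Theorem~\ref{thm: est_discr} we already have
\[
\left| D^\ast_{\mathscr{A}, \pi}(P_n) - D^\ast_{\mathscr{A}, \varphi}(\mathcal{U}_n) \right|
\le \sup_{A \in \mathscr{A}} \left| \frac{1}{n} \sum_{i=1}^n K^i(x,A) - \pi(A) \right|,
\]
so everything reduces to estimating the right-hand side, which no longer involves the driver sequence at all.

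First I would pull the sum inside the absolute value using the triangle inequality, writing
\[
\left| \frac{1}{n} \sum_{i=1}^n \left( K^i(x,A) - \pi(A) \right) \right|
\le \frac{1}{n} \sum_{i=1}^n \left| K^i(x,A) - \pi(A) \right|.
\]
Next, since $A \in \mathscr{A} \subseteq \mathcal{B}(G)$, each summand is bounded by the total variation distance: $\left| K^i(x,A) - \pi(A) \right| \le \norm{K^i(x,\cdot) - \pi}_{\text{\rm tv}}$. Applying the uniform ergodicity assumption with $(\a, M)$ gives $\norm{K^i(x,\cdot) - \pi}_{\text{\rm tv}} \le \a^i M$, and crucially this bound is uniform in both $x$ and $A$, so it survives taking the supremum over $A \in \mathscr{A}$.

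The final step is to sum the resulting geometric series. I would bound
\[
\frac{1}{n} \sum_{i=1}^n \a^i M
= \frac{M}{n} \sum_{i=1}^n \a^i
\le \frac{M}{n} \cdot \frac{\a}{1-\a}
= \frac{\a M}{n(1-\a)},
\]
using that $\sum_{i=1}^n \a^i \le \sum_{i=1}^\infty \a^i = \frac{\a}{1-\a}$ because $\a \in [0,1)$. Combining this with Theorem~\ref{thm: est_discr} yields exactly the claimed bound.

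I do not anticipate a genuine obstacle here: the result is essentially a direct corollary, and the only point requiring a little care is that the geometric bound $\a^i M$ holds uniformly over $A$ and $x$, which is precisely what the definition of uniform ergodicity guarantees, so the supremum over $\mathscr{A}$ can be taken before summing without loss. The one cosmetic choice is whether to start the geometric sum at $i=1$ (giving the factor $\a/(1-\a)$ in the numerator, matching the statement) rather than extending to a bound like $M/(n(1-\a))$; I would start at $i=1$ to obtain the stated constant exactly.
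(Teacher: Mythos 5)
Your proof is correct and follows essentially the same route as the paper: apply Theorem~\ref{thm: est_discr}, use the triangle inequality and the uniform ergodicity bound $\abs{K^i(x,A)-\pi(A)} \le \norm{K^i(x,\cdot)-\pi}_{\text{\rm tv}} \le \a^i M$, and sum the geometric series to obtain $\frac{\a M}{n(1-\a)}$. Your explicit remark that the bound is uniform in $x$ and $A$ (so the supremum is harmless) is the only detail the paper leaves implicit.
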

\begin{proof}
  By the uniform ergodicity with $(\a,M)$ we obtain
  \begin{align*}
   \abs{  \frac{1}{n} \sum_{i=1}^n K^i(x,A) - \pi(A) }
& \leq \frac{1}{n} \sum_{i=1}^n \abs{K^i(x,A) - \pi(A) } \\
& \leq \frac{1}{n} \sum_{i=1}^\infty \a^i M = \frac{\a M }{n(1-\a)}.
  \end{align*}
  Then by Theorem~\ref{thm: est_discr} the assertion is proven.
\end{proof}

\begin{remark} \label{rem: simple_unif_erg}
 In the setting of Example~\ref{ex: simple_rem}, where we assumed that $G=[0,1]^s$ and $K(x,A)=\pi(A)$ we
 obtain that $\a = 0$. In this case we get the well studied star-discrepancy for the uniform distribution on $[0,1]^s$, see for instance \cite{DiPi10}.
\end{remark}

 Theorem~\ref{thm: est_discr} gives an estimate of the star-discrepancy
 in terms of the discrepancy of the driver sequence and a quantity which depends on the
 transition kernel. We have seen in the previous corollary that for uniformly ergodic
 Markov chains we can further estimate the difference of the discrepancies  
 $D^\ast_{\mathscr{A}, \pi}(P_n)$ and $D^\ast_{\mathscr{A}, \varphi}(\mathcal{U}_n)$.
 Let us mention
 here that for geometrically ergodic transition kernel
 one can prove a similar bound.
 
In Corollary~\ref{cor_main_with_delta} we state a bound on
$D^\ast_{\mathscr{A}, \pi}(P_n)$ of order $\mathcal{O}(n^{-1/2} (\log n)^{1/2})$,
and by Corollary~\ref{coro: D_U_almost_D_P} we have that the pull-back discrepancy
of the driver sequence satisfies the same convergence order.

 From Corollary~\ref{coro: D_U_almost_D_P}
 and Theorem~\ref{thm_int_error}
 in Appendix~\ref{sec_int_error}
 we now obtain the following Koksma-Hlawka inequality
 (cf. \cite[Proposition~2.18]{DiPi10}) in terms of the pull-back discrepancy.

 \begin{corollary}[Koksma-Hlawka inequality for uniformly ergodic Markov chains]\label{cor_KH_inequality}
 Let us assume that the conditions of Theorem~\ref{thm: est_discr} are satisfied.
 Further let $\a\in[0,1)$ and $M\in(0,\infty)$ and
 assume that the transition kernel is uniformly
 ergodic with $(\a,M)$.
 With a measure $\rho$ on $\R^d$
 let $H_1$ denote the space
 of functions $f:G \to \mathbb{C}$ permitting
 the representation
 \[
  f(x) = f_0 + \int_{\R^d} \mathbf{1}_{(-\infty,z)_G}(x) \widetilde f(z) \, \rho(\rd z)
 \]
 for some $f_0 \in \mathbb{C}$ and $\widetilde f\in L_1(\R^d,\rho)$ with finite
 \[
  \norm{f}_{H_1} = \abs{f_0} + \int_{\R^d}
  \abs{\widetilde f(z)} \, \rho(\rd z).
 \]
 Then for all $f \in H_1$ we have
 \begin{align*}
\left|\int_G f(x) \pi(\rd z) - \frac{1}{n} \sum_{i=1}^n f(x_i) \right| \le \left( D^\ast_{\mathscr{A}, \varphi}(\mathcal{U}_n) + \frac{\a M }{n(1-\a)} \right) \|f\|_{H_1}.
\end{align*}
 \end{corollary}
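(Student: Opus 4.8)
The plan is to chain together the two results that the corollary already cites, since this Koksma--Hlawka inequality is really just a two-line combination of an existing deterministic integration-error bound with the discrepancy comparison of Corollary~\ref{coro: D_U_almost_D_P}. First I would apply the Koksma--Hlawka inequality of Theorem~\ref{thm_int_error} to the \emph{deterministic} point set $P_n = \{x_1,\dots,x_n\}$ produced from the fixed driver sequence via \eqref{eq: x_i_by_driver_seq}. The inequality proved in the appendix is a purely deterministic statement about an arbitrary finite point set in $G$, relative to the test class $\mathscr{A} = \{(-\infty,x)_G : x \in \bar{\mathbb{R}}^d\}$ attached to the space $H_1$; hence it applies verbatim with the random $X_i$ replaced by the deterministic $x_i$, yielding $|\mathbb{E}_\pi(f) - \frac{1}{n}\sum_{i=1}^n f(x_i)| \le D^\ast_{\mathscr{A},\pi}(P_n)\,\|f\|_{H_1}$ for every $f \in H_1$.

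Second, I would replace the star-discrepancy $D^\ast_{\mathscr{A},\pi}(P_n)$ by the push-back discrepancy of the driver sequence. Under uniform ergodicity with $(\alpha,M)$, Corollary~\ref{coro: D_U_almost_D_P} controls the difference of the two discrepancies; taking the upper half of that two-sided estimate gives $D^\ast_{\mathscr{A},\pi}(P_n) \le D^\ast_{\mathscr{A},\varphi}(\mathcal{U}_n) + \frac{\alpha M}{n(1-\alpha)}$. Since $\|f\|_{H_1} \ge 0$, substituting this upper bound into the integration-error inequality from the first step and factoring out $\|f\|_{H_1}$ reproduces exactly the claimed estimate (here $\int_G f(x)\,\pi(\mathrm{d}x) = \mathbb{E}_\pi(f)$, the displayed $\mathrm{d}z$ being a typo).

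There is essentially no hard step here; the only points needing a moment's care are bookkeeping rather than mathematical. I would check (i) that Theorem~\ref{thm_int_error} is indeed formulated for, or immediately extends to, an arbitrary deterministic point set, so that it may legitimately be invoked for the $P_n$ generated by a fixed driver sequence rather than for a random chain, and (ii) that the test class $\mathscr{A}$ appearing in Corollary~\ref{coro: D_U_almost_D_P} is taken to be the one compatible with $H_1$, namely $\mathscr{A} = \{(-\infty,x)_G : x \in \bar{\mathbb{R}}^d\}$, so that both inequalities refer to the same discrepancy functional. Both checks are immediate from the definitions, so the anticipated ``main obstacle'' is merely ensuring the two quoted results are stated over the same test class and the same (deterministic) point configuration.
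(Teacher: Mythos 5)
Your proposal is correct and follows exactly the route the paper takes: the paper derives this corollary precisely by combining the deterministic integration-error bound of Theorem~\ref{thm_int_error} (with $q=1$, $p=\infty$) applied to the point set $P_n$ with the one-sided estimate $D^\ast_{\mathscr{A},\pi}(P_n) \le D^\ast_{\mathscr{A},\varphi}(\mathcal{U}_n) + \frac{\a M}{n(1-\a)}$ from Corollary~\ref{coro: D_U_almost_D_P}. Your bookkeeping checks (deterministic point set, matching test class $\mathscr{A}$) are the right ones and are indeed immediate from the definitions.
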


Again, in the setting of Example~\ref{ex: simple_rem} for direct simulation we have $\alpha = 0$ and we obtain the Koksma-Hlawka inequality
\begin{equation*}
\left|\int_G f(x) \pi(\rd z) - \frac{1}{n} \sum_{i=1}^n f(x_i) \right| \le D^\ast_{\mathscr{A}, \varphi}(\mathcal{U}_n) \|f\|_{H_1}.
\end{equation*}

\section{On the existence of good driver sequences}\label{sec_exist}

In this section we show the existence of finite sequences $\mathcal{U}_n = (u_1, u_2,\ldots, u_n) \in [0,1]^{ns}$ such that
\begin{align*}
D^\ast_{\mathscr{A},\varphi}(\mathcal{U}_n)
\quad \text{and} \quad
D^\ast_{\mathscr{A}, \pi}(P_n)
\end{align*}
converge to $0$  if the transition kernel is uniformly ergodic
and $P_n$ is given by \eqref{eq: x_i_by_driver_seq}.
The main result is proven for $D^\ast_{\mathscr{A}, \pi}(P_n)$.
The result with respect to $D^\ast_{\mathscr{A}, \varphi}(\mathcal{U}_n)$ holds by Theorem~\ref{thm: est_discr}.\\

The concept of a $\delta$-cover will be useful (cf. \cite{Gn08} for a discussion of $\delta$-covers, bracketing numbers and Vapnik-\v{C}ervonenkis dimension).
\begin{definition}
Let $\mathscr{A} \subseteq \mathcal{B}(G)$ be a set of test sets.
A finite subset $\Gamma_\delta \subseteq \mathscr{A}$ is called a $\delta$-cover
of $\mathscr{A}$ with respect to $\pi$
 if for every $A \in \mathscr{A}$ there are sets $C, D \in \Gamma_\delta$ such that
\begin{equation*}
C \subseteq A \subseteq D
\end{equation*}
and
\begin{equation*}
\pi(D \setminus C) \le \delta.
\end{equation*}
\end{definition}

\begin{remark}
The concept of a $\delta$-cover is motivated by the following result.
Let us assume that $\Gamma_\delta$ is a $\delta$-cover of $\mathscr{A}$.
Then, for all $\{z_1,\dots,z_n\}$, the following discrepancy inequality holds
\begin{equation*}
\sup_{A \in \mathscr{A}} \left|\frac{1}{n} \sum_{i=1}^n 1_{z_i \in A} - \pi(A) \right|
\le \max_{C \in \Gamma_\delta} \left|\frac{1}{n} \sum_{i=1}^n 1_{z_i \in C} - \pi(C) \right| + \delta.
\end{equation*}

\begin{proof}
Let $A \in \mathscr{A}$ and $B \subseteq A \subseteq C$ be such that $\pi(C \setminus B) \le \delta$. Then
\begin{align*}
\frac{1}{n} \sum_{i=1}^n 1_{z_i \in A} - \pi(A) \le \frac{1}{n} \sum_{i=1}^n 1_{z_i \in C} - \pi(C) + \delta
\end{align*}
and
\begin{align*}
\frac{1}{n} \sum_{i=1}^n 1_{z_i \in A} - \pi(A) \ge \frac{1}{n} \sum_{i=1}^n 1_{z_i \in B} - \pi(B) - \delta.
\end{align*}
Thus the result follows.
\end{proof}
\end{remark}

Let us introduce the notation
$\Delta_{n,A,\varphi,x} =\Delta^{\loc}_{n,A,\varphi}(x; u_1,\dots,u_n)$
 and note that
\begin{align}
\Delta_{n,A,\varphi,x} \notag
& =\Delta^{\loc}_{n,A,\varphi}(x; u_1,\dots,u_n) \\ \notag
& =\frac{1}{n}  \sum_{i=1}^n \left[1_{(u_1,\ldots, u_i) \in B_i(x,A)} - \pi(A) \right] \\
\label{al: repr_delta}
& =\frac{1}{n}  \sum_{i=1}^n \left[1_{\varphi_i(x; u_1,\ldots, u_i) \in A} - \pi(A) \right].
\end{align}

\begin{lemma}  \label{lem: same_conc}
 Let $K$ be a transition kernel with stationary distribution $\pi$. Let $\varphi$ be an
 update function of $K$. Let $X_1, X_2, \dots, X_n$ be given by a Markov chain
 with transition kernel $K$ and $X_0=x$. Then for any $A\in\mathcal{B}(G)$ and $c>0$ we obtain
 \begin{equation}
  \P[ \abs{\Delta_{n,A,\varphi,x}} \geq c  ]
= \P_{x,K} \left[ \abs{\frac{1}{n} \sum_{i=1}^n 1_{X_i\in A}-\pi(A)} \geq c \right],
\end{equation}
  where $\P$ is the probability measure for the uniform distribution in $[0,1]^{ns}$ and $\P_{x,K}$ is
  the joint probability of $X_1,\dots,X_n$ with $X_0=x$.
\end{lemma}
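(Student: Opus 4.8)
The plan is to deduce the equality of the two probabilities directly from the change-of-variables identity in Lemma~\ref{lem: same_expect}, applied to a suitable indicator integrand. First I would recall the representation \eqref{al: repr_delta}, which exhibits $\Delta_{n,A,\varphi,x}$ purely as a deterministic function of the driver block $(u_1,\dots,u_n)\in[0,1]^{ns}$, namely
\[
\Delta_{n,A,\varphi,x} = \frac{1}{n}\sum_{i=1}^n\left[1_{\varphi_i(x;u_1,\dots,u_i)\in A}-\pi(A)\right].
\]
Since $\P$ is the uniform (Lebesgue) measure on $[0,1]^{ns}$, the left-hand event $\{|\Delta_{n,A,\varphi,x}|\geq c\}$ is a Borel subset of the cube, and its $\P$-measure is simply the Lebesgue volume of that subset.

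Next I would choose the test function
\[
F(x_1,\dots,x_n) = 1_{\left\{\left|\frac{1}{n}\sum_{i=1}^n\left(1_{x_i\in A}-\pi(A)\right)\right|\geq c\right\}},
\]
which is bounded and measurable on $G^n$, so that Lemma~\ref{lem: same_expect} applies and both integrals in \eqref{eq: same_expect} exist. With this choice the left-hand side of \eqref{eq: same_expect}, namely $\EE_{x,K}(F(X_1,\dots,X_n))$, is precisely the probability on the right-hand side of the claimed identity. Meanwhile, by the representation displayed above, the integrand $F(\varphi_1(x;u_1),\dots,\varphi_n(x;u_1,\dots,u_n))$ equals $1_{\{|\Delta_{n,A,\varphi,x}|\geq c\}}$, so the right-hand side of \eqref{eq: same_expect} equals $\P[\,|\Delta_{n,A,\varphi,x}|\geq c\,]$. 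Equating the two sides of \eqref{eq: same_expect} then yields the assertion.

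There is essentially no hard step: the content is entirely carried by Lemma~\ref{lem: same_expect}, which already encodes the fact that the joint law of $(X_1,\dots,X_n)$ under $\P_{x,K}$ is the pushforward of the uniform measure on $[0,1]^{ns}$ under the map $(u_1,\dots,u_n)\mapsto(\varphi_1(x;u_1),\dots,\varphi_n(x;u_1,\dots,u_n))$. The only point requiring minor care is to confirm that the chosen $F$ is bounded and measurable, so that the hypothesis of Lemma~\ref{lem: same_expect} is met; as $F$ is the indicator of a Borel subset of $G^n$, this is immediate.
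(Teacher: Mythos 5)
Your proposal is correct and takes essentially the same approach as the paper: the paper's proof likewise introduces the indicator $F$ of the set $J(A,c)=\bigl\{(z_1,\dots,z_n)\in G^n : \bigl|\tfrac{1}{n}\sum_{i=1}^n 1_{z_i\in A}-\pi(A)\bigr|\geq c\bigr\}$, identifies $\EE_{x,K}(F(X_1,\dots,X_n))$ with $\P_{x,K}(J(A,c))$, and concludes by applying Lemma~\ref{lem: same_expect} together with the representation \eqref{al: repr_delta}. There is nothing substantive to distinguish the two arguments.
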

 \begin{proof}
  Let
  \[
      J(A,c) = \left \{ (z_1,\dots,z_n)\in G^n \colon  \abs{\frac{1}{n} \sum_{i=1}^n 1_{z_i\in A}-\pi(A)} \geq c  \right\}.
  \]
  Set
  \[
   F(x_1,\dots,x_n)
= 1_{(x_1,\dots,x_n)\in J(A,c)}
= \begin{cases}
    1    & \abs{\frac{1}{n} \sum_{i=1}^n 1_{x_i\in A}-\pi(A)} \geq c,\\
    0    & \text{otherwise}.			
  \end{cases}
  \]
  By
  \[
   \EE_{x,K}(F(X_1,\dots,X_n)) = \P_{x,K} (J(A,c)),
  \]
   Lemma~\ref{lem: same_expect} and \eqref{al: repr_delta} the assertion is proven.
 \end{proof}

The next result follows from \cite{GlOr02} and
gives us a Hoeffding inequality for uniformly ergodic Markov chains.
For the convenience of the reader we provide a proof in Appendix~\ref{app: proof}.
\begin{proposition}[Hoeffding inequality for uniformly ergodic Markov chains]
\label{prop: Hoeffd}
Assume that the transition kernel $K$ is uniformly ergodic with $(\a,M)$.
 Let $X_1, X_2, \dots, X_n$ be given by a Markov chain
 with transition kernel $K$ and $X_0=x$.
 Then for any $A\in\mathcal{B}(G)$ and $c>0$ we obtain
 \begin{equation}
  \P_{x,K} \left[ \abs{\frac{1}{n} \sum_{i=1}^n 1_{X_i\in A}-\pi(A)} \geq c \right]
\leq 2
\exp\left( -\frac{(1-\a)^2}{M^2} \frac{( nc - \frac{2M}{1-\a} )^2}{ 8n}\right),
\end{equation}
where $n\geq  \frac{4M}{(1-\a)c}$.
\end{proposition}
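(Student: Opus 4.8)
The plan is to follow the Glynn--Ormoneit argument, which reduces the dependent sum to a martingale through the Poisson equation and then applies an exponential tail bound. Write $f = 1_A$ and $\hat f = f - \pi(A)$, so that $\norm{\hat f}_\infty \le 1$ and $\EE_\pi(\hat f) = 0$. Since $\P[\abs{Z} \ge c] \le \P[Z \ge c] + \P[-Z \ge c]$ and the second tail is treated identically after replacing $A$ by $G \setminus A$ (equivalently $\hat f$ by $-\hat f$), it suffices to bound the one-sided tail $\P_{x,K}[\sum_{i=1}^n \hat f(X_i) \ge nc]$ and to multiply the resulting estimate by $2$.

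First I would solve the Poisson equation $g - Pg = \hat f$ by the Neumann series $g = \sum_{j \ge 0}(P^j f - \pi(A))$. This is where uniform ergodicity enters: for the indicator $f = 1_A$ we have $\abs{P^j f(y) - \pi(A)} = \abs{K^j(y,A) - \pi(A)} \le \a^j M$ for every $y \in G$, so the series converges uniformly with
\[
\norm{g}_\infty \le \sum_{j \ge 0} \a^j M = \frac{M}{1-\a}.
\]
(The operator form of uniform ergodicity in Proposition~\ref{prop: L_infty_erg} gives the analogous estimate for general bounded $f$, but with an extra factor $2$; the sharper indicator bound above is what produces the stated constant.)

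The key algebraic step is then a telescoping martingale decomposition. Setting $M_i = g(X_i) - Pg(X_{i-1})$ for $1 \le i \le n$, the tower property gives $\EE[M_i \mid X_0, \dots, X_{i-1}] = 0$, so $(M_i)$ is a martingale difference sequence; combining this with $\hat f = g - Pg$ yields
\[
\sum_{i=1}^n \hat f(X_i) = \sum_{i=1}^n M_i + \big(Pg(X_0) - Pg(X_n)\big).
\]
The boundary term is bounded, $\abs{Pg(X_0) - Pg(X_n)} \le 2\norm{g}_\infty \le \frac{2M}{1-\a}$, so the one-sided tail is dominated by $\P_{x,K}[\sum_{i=1}^n M_i \ge s]$ with $s := nc - \frac{2M}{1-\a}$; the hypothesis $n \ge \frac{4M}{(1-\a)c}$ is exactly what forces $s \ge \frac{2M}{1-\a} > 0$.

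Finally I would run the Chernoff method on the martingale. Because $\abs{M_i} \le 2\norm{g}_\infty$ conditionally, Hoeffding's lemma gives $\EE[e^{tM_i} \mid X_0,\dots,X_{i-1}] \le \exp\big(2 t^2 \norm{g}_\infty^2\big) \le \exp\big(2 t^2 M^2/(1-\a)^2\big)$ for every $t > 0$, and iterating the conditioning yields $\EE[e^{t \sum_i M_i}] \le \exp\big(2 n t^2 M^2/(1-\a)^2\big)$. Minimising $e^{-ts}\,\EE[e^{t\sum_i M_i}]$ over $t$, at $t = s(1-\a)^2/(4nM^2)$, produces $\exp\big(-\tfrac{(1-\a)^2}{M^2}\tfrac{s^2}{8n}\big)$; substituting $s = nc - \frac{2M}{1-\a}$ and inserting the factor $2$ from the two-sided reduction gives precisely the claimed bound. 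I expect the crux to be the uniform control of $\norm{g}_\infty$: the whole estimate rests on uniform ergodicity making the Poisson series summable with the correct constant, and on keeping the boundary term and the per-step sub-Gaussian proxy calibrated so that the constants collapse to the stated form.
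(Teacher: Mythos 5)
Your proof is correct, and it is in substance the same argument the paper uses: the paper's proof of Proposition~\ref{prop: Hoeffd} is essentially a citation to the proof of \cite[Theorem~2]{GlOr02}, and what you have written out (Poisson equation $g - Pg = \hat f$ via the Neumann series, the martingale decomposition $\sum_{i=1}^n \hat f(X_i) = \sum_{i=1}^n M_i + Pg(X_0) - Pg(X_n)$, conditional Hoeffding plus Chernoff optimization, and the observation that $n \ge \tfrac{4M}{(1-\alpha)c}$ forces $s = nc - \tfrac{2M}{1-\alpha} \ge \tfrac{2M}{1-\alpha} > 0$) is precisely that argument; all constants check out.

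The one place where you genuinely deviate from the paper's sketch is worth noting, and it is in your favor. The paper routes the ergodicity input through Proposition~\ref{prop: L_infty_erg}, i.e.\ it uses $\abs{P^j f_c(x)} \le \norm{P^j - \mathbb{E}_\pi}_{L_\infty \to L_\infty} \le 2M\alpha^j$, which only gives $\norm{g}_\infty \le 2M/(1-\alpha)$; pushed through the same martingale/Chernoff steps this yields a boundary term $4M/(1-\alpha)$ in place of $2M/(1-\alpha)$ (and, with your per-step bound, a $32n$ in place of $8n$), i.e.\ a strictly weaker inequality than the one stated. You instead bound $\abs{P^j f_c(y)} = \abs{K^j(y,A) - \pi(A)} \le \alpha^j M$ directly from the total-variation definition of uniform ergodicity, which is legitimate because $f_c$ is a centered indicator, and this sharper bound $\norm{g}_\infty \le M/(1-\alpha)$ is exactly what reproduces the constants as stated in the proposition. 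So your parenthetical remark about the factor $2$ correctly identifies a small calibration issue in the paper's own two-line proof, and your reconstruction is the version that actually delivers the claimed bound.
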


\subsection{Monte Carlo rate of convergence}\label{subsec_main}

We now show that for every starting point $x_0$ and every $n$ there exists a finite sequence
$u_1, u_2, \ldots, u_n \in [0,1]^s$ such that the discrepancy of
the corresponding Markov chain converges approximately with order $n^{-1/2}$.
The main idea to prove the existence result is to use probabilistic arguments.
We apply a Hoeffding inequality for Markov chains to the local discrepancy function
for a fixed test set to show that the probability of point sets with small local discrepancy is large.
We then extend this result to the local discrepancy
for all sets in the $\delta$-cover and finally to all test sets.
Using Corollary~\ref{coro: D_U_almost_D_P} we are also able to
obtain a result for the pull-back discrepancy of the driver sequence.
In cases where there are $\delta$-covers with $|\Gamma_\delta| \le C \delta^{-\kappa}$ 
for some constants $C, \kappa> 0$ independent of $\delta$ 
(see Appendix~\ref{sec_delta_covers} for an example), 
the proof of the following theorem shows, in particular, 
that if the finite driver sequence is chosen at random from the uniform distribution, 
the discrepancy of the induced point set $P_n$ 
converges with high probability with almost the Monte Carlo rate.

\begin{theorem}\label{thm_main}
Let $K$ be a transition kernel with stationary distribution $\pi$
defined on a set $G \subseteq \mathbb{R}^d$. Assume that the transition kernel is uniformly
ergodic with $(\a,M)$. Let $\mathscr{A}\subseteq \mathcal{B}(G)$ be a set of test sets.
Assume that for every $\delta > 0$ there exists a set $\Gamma_\delta \subseteq \mathcal{B}(G)$
with $|\Gamma_\delta| < \infty$ such that $\Gamma_\delta$ is a $\delta$-cover of $\mathscr{A}$
with respect to $\pi$.
Let $\varphi$ be an update function for $K$.
Then, for any $x_0=x$ there exists a driver sequence $u_1, u_2, \ldots, u_n \in [0,1]^s$
such that $P_n=\{x_1,\dots,x_n\}$ given by
\begin{equation*}
 \jd{ x_i = \varphi(x_{i-1};u_i) } = \varphi_i(x_0; u_1,\dots,u_i), \quad i=1,\dots,n
\end{equation*}
satisfies
\begin{align*}
D^\ast_{\mathscr{A}, \pi}(P_n)
&  \le \frac{8M}{1-\a}\frac{\sqrt{\log |\Gamma_\delta|}}{\sqrt{n}} + \delta.
\end{align*}
\end{theorem}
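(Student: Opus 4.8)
The plan is to prove the bound via the probabilistic method, showing that a randomly chosen driver sequence yields small discrepancy with positive probability. By Lemma~\ref{lem: same_conc}, choosing the driver sequence $\mathcal{U}_n$ uniformly at random from $[0,1]^{ns}$ is equivalent, at the level of the local discrepancy function, to running the Markov chain with i.i.d.\ $\mathcal{U}(0,1)$ inputs. So I would work on the probability space $([0,1]^{ns}, \P)$ and bound the probability that the discrepancy over the finite $\delta$-cover $\Gamma_\delta$ exceeds a threshold $c$. The key reduction is the $\delta$-cover remark: for any point set, $D^\ast_{\mathscr{A},\pi}(P_n) \le \max_{C \in \Gamma_\delta} |\frac{1}{n}\sum_i 1_{x_i \in C} - \pi(C)| + \delta$, so it suffices to control the maximum of $|\Delta_{n,C,\varphi,x}|$ over the \emph{finitely many} sets $C \in \Gamma_\delta$.

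First I would fix $c > 0$ and apply a union bound over $\Gamma_\delta$:
\begin{align*}
\P\left[ \max_{C \in \Gamma_\delta} |\Delta_{n,C,\varphi,x}| \ge c \right]
\le \sum_{C \in \Gamma_\delta} \P\left[ |\Delta_{n,C,\varphi,x}| \ge c \right].
\end{align*}
By Lemma~\ref{lem: same_conc} each summand equals $\P_{x,K}[|\frac{1}{n}\sum_i 1_{X_i \in C} - \pi(C)| \ge c]$, which by the Hoeffding inequality of Proposition~\ref{prop: Hoeffd} is at most $2\exp\!\left(-\frac{(1-\a)^2}{M^2}\frac{(nc - \frac{2M}{1-\a})^2}{8n}\right)$. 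Hence the union bound gives
\begin{align*}
\P\left[ \max_{C \in \Gamma_\delta} |\Delta_{n,C,\varphi,x}| \ge c \right]
\le 2|\Gamma_\delta| \exp\!\left(-\frac{(1-\a)^2}{M^2}\frac{\bigl(nc - \frac{2M}{1-\a}\bigr)^2}{8n}\right).
\end{align*}
To conclude existence, I must choose $c$ so that this probability is strictly less than $1$; then there exists at least one driver sequence for which $\max_{C} |\Delta_{n,C,\varphi,x}| < c$, and adding $\delta$ via the cover bound yields the stated estimate.

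The main task is therefore the calibration of $c$: I would pick $c$ of the form $\frac{C_0 M}{1-\a}\frac{\sqrt{\log|\Gamma_\delta|}}{\sqrt{n}}$ and verify that the exponential factor beats $2|\Gamma_\delta|$. Taking logarithms, I need $\frac{(1-\a)^2}{8M^2 n}(nc - \frac{2M}{1-\a})^2 > \log(2|\Gamma_\delta|)$. Writing $nc = \frac{C_0 M}{1-\a}\sqrt{n\log|\Gamma_\delta|}$, the subtracted term $\frac{2M}{1-\a}$ is lower order in $n$, so for $n$ large the squared expression is $\approx \frac{C_0^2 M^2}{(1-\a)^2} n \log|\Gamma_\delta|$, and the whole exponent is $\approx \frac{C_0^2}{8}\log|\Gamma_\delta|$, which exceeds $\log(2|\Gamma_\delta|)$ once $C_0$ is large enough (the constant $8$ in the target bound corresponds to this choice). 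I expect the genuinely fiddly part to be handling the additive correction $\frac{2M}{1-\a}$ cleanly and confirming the side condition $n \ge \frac{4M}{(1-\a)c}$ required by Proposition~\ref{prop: Hoeffd}; one must check that the chosen $c$ is compatible with this threshold, possibly absorbing small-$n$ cases into the constant. Since the bound is asserted for the specific constant $\frac{8M}{1-\a}$, the argument should be arranged so that the inequality $\P[\,\cdot\,] < 1$ holds with that exact prefactor rather than merely asymptotically, which is where the careful bookkeeping on the quadratic lies.
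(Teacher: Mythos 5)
Your proposal is correct, and it uses the same probabilistic core as the paper — Lemma~\ref{lem: same_conc} to transfer to the Markov chain measure, the Glynn--Ormoneit Hoeffding bound of Proposition~\ref{prop: Hoeffd}, a union bound over a finite family, then existence from positive probability — but the reduction step is genuinely different, and in fact cleaner. The paper does \emph{not} use the Remark following the $\delta$-cover definition: it applies the union bound to the difference family $\widehat{\Gamma}_\delta=\{D\setminus C : C\subseteq A\subseteq D,\ C,D\in\Gamma_\delta\}$, of size $m\le|\Gamma_\delta|^2/2$, and then extends to a general $A\in\mathscr{A}$ by writing $\Delta_A=\Delta_D-\Delta_{D\setminus A}$ and bounding $I_1+I_2\le 2c_n+\delta$, thus paying the Hoeffding threshold \emph{twice} and over a quadratically larger family. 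You instead union-bound directly over $\Gamma_\delta$ and invoke the sandwich inequality $D^\ast_{\mathscr{A},\pi}(P_n)\le\max_{C\in\Gamma_\delta}|\Delta_{n,C,\varphi,x}|+\delta$, paying the threshold once over only $|\Gamma_\delta|$ sets. This buys you real room in the calibration you flag as the fiddly part: with $c=\frac{8M}{1-\a}\sqrt{\log|\Gamma_\delta|/n}$ and $|\Gamma_\delta|\ge 2$, one checks $nc\ge\frac{4M}{1-\a}$ (so Proposition~\ref{prop: Hoeffd} applies) and
\begin{equation*}
2|\Gamma_\delta|\exp\left(-\frac{(1-\a)^2}{8M^2n}\Bigl(nc-\tfrac{2M}{1-\a}\Bigr)^2\right)<1 ,
\end{equation*}
since $\bigl(8\sqrt{n\log|\Gamma_\delta|}-2\bigr)^2>8n\log(2|\Gamma_\delta|)$ whenever $\sqrt{n\log|\Gamma_\delta|}>1/2$; so the stated constant $8$ goes through comfortably. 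By contrast, the paper's own bookkeeping ($c_n=\frac{4M}{1-\a}\sqrt{2\log(2m)/n}$, final bound $2c_n+\delta$, $m\le|\Gamma_\delta|^2/2$) actually delivers $\frac{16M}{1-\a}\sqrt{\log|\Gamma_\delta|/n}+\delta$ as written, so your route is not only simpler but recovers the theorem's constant where the paper's loses a factor of $2$. The only caveat, shared by both arguments, is the degenerate case $|\Gamma_\delta|=1$ (where $c=0$), which should be excluded or treated separately.
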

\begin{proof}
Let $A\in\mathscr{A}$ and $x_0=x\in G$.
By Lemma~\ref{lem: same_conc} and Proposition~\ref{prop: Hoeffd}
we obtain for any
$c_n \geq \frac{4M}{n(1-\a)}$
that
\begin{equation}  \label{eq: hoeffd_for_us}
\mathbb{P}\left[\abs{  \Delta_{x,n,A,\varphi}}  \leq c_n \right]
\geq  1- 2 \exp\left( -\frac{(1-\a)^2}{M^2} \frac{( nc_n - \frac{2M}{1-\a} )^2}{ 8n}\right).
\end{equation}
Let
\begin{equation*}
   \widehat{\Gamma}_{\delta}=\{D\setminus C : C \subseteq A \subseteq D,\, C,D \in\Gamma_\delta \}.
\end{equation*}
Set $m= |\widehat{\Gamma}_\delta |$.
If we have for all $A\in \widehat{\Gamma}_{\delta}$ that
\begin{equation} \label{eq: for_all_sets}
 \mathbb{P}\left[\abs{  \Delta_{x,n,A,\varphi}}  \leq c_n \right] >  1 - \frac{1}{m},
\end{equation}
then there exists a finite sequence $u_1,\dots,u_n \in [0,1]^s$ such that
\begin{equation}  \label{eq: gamma_prime}
  \max_{A \in \widehat{\Gamma}_\delta} \abs{  \Delta_{x,n,A,\varphi}}  \leq c_n.
\end{equation}
By \eqref{eq: hoeffd_for_us} we obtain for
\[
 c_n=\frac{4M}{1-\a}\frac{\sqrt{2\log(2m)}}{\sqrt{n}}
\]
that \eqref{eq: for_all_sets} holds and we get the desired result for any $A\in \widehat{\Gamma}_\delta$.
Now we extend the result from $\widehat{\Gamma}_\delta$ to $\mathscr{A}$.
For $A\in\mathscr{A}$, there are $C,D \in \Gamma_\delta$ such that $C \subseteq A \subseteq D$
and $\pi(D\setminus C)\leq \delta$, since $\Gamma_\delta$ is a $\delta$-cover.
Hence we get
\begin{align*}
& \left|\frac{1}{n} \sum_{i=1}^n \left[ 1_{(u_1,\ldots, u_i) \in B_i(x,A) } - \pi(A) \right] \right|\\
=&  \left|\frac{1}{n} \sum_{i=1}^n \left[ 1_{(u_1,\ldots, u_i) \in B_i(x,D) } - \pi(D) \right]
-\frac{1}{n} \sum_{i=1}^n \left[ 1_{(u_1,\ldots, u_i) \in B_i(x,D\setminus A) } - \pi(D\setminus A) \right] \right|\\
\leq &  \left|\frac{1}{n} \sum_{i=1}^n \left[ 1_{(u_1,\ldots, u_i) \in B_i(x,D) } - \pi(D) \right] \right|
   +\left|\frac{1}{n} \sum_{i=1}^n  \left[ 1_{(u_1,\ldots, u_i) \in B_i(x,D\setminus A) } - \pi(D\setminus A) \right] \right| .
\end{align*}
Let
\[
 I_1:=\left|\frac{1}{n} \sum_{i=1}^n 
 \left[ 1_{(u_1,\ldots, u_i) \in B_i(x,D) } - \pi(D) \right] \right| 
\]
and
\[
 I_2:=\left|\frac{1}{n} \sum_{i=1}^n
\left[ 1_{(u_1,\ldots, u_i) \in B_i(x,D\setminus A) } - \pi(D\setminus A) \right] \right|.
\]
By $D\in \widehat{\Gamma}_{\delta}$ we have
\begin{equation*}
  I_1\leq\max_{A \in \widehat{\Gamma}_\delta} \left| \Delta_{n,A,\varphi,x} \right| \leq c_n.
\end{equation*}
Furthermore
\begin{align*}
I_2
  &\;\, = \left|\frac{1}{n} \sum_{i=1}^n 1_{(u_1,\ldots, u_i) \in B_i(x,D\setminus A)} - \pi(D\setminus C)  + \pi(D\setminus C) -\pi(D\setminus A) \right| \\
     &\;\, \leq \left|\frac{1}{n} \sum_{i=1}^n
\left[ 1_{(u_1,\ldots, u_i) \in B_i(x,D\setminus C)} - \pi(D\setminus C)\right]\right|
    + \left| \pi(D\setminus C) - \pi(D \setminus A) \right| \\
     &\;\, \leq c_n +  \delta.
\end{align*}
The last inequality follows by the $\delta$-cover property, \eqref{eq: gamma_prime}
and the fact that $D\setminus C \in \widehat{\Gamma}_\delta$.
Finally note that $m= |\widehat{\Gamma}_{\delta} | \leq |\Gamma_\delta |^2/2$ which completes the proof.
\end{proof}

Using Corollary~\ref{coro: D_U_almost_D_P} we can also state Theorem~\ref{thm_main} in terms of the driver sequence.

\begin{corollary}  \label{coro: push_back_push_forward}
Let $K$ be a transition kernel with stationary distribution $\pi$
defined on a set $G \subseteq \mathbb{R}^d$. Assume that the transition kernel is uniformly
ergodic with $(\a,M)$. Let $\mathscr{A}\subseteq \mathcal{B}(G)$ be a set of test sets.
Assume that for every $\delta > 0$ there exists a set $\Gamma_\delta \subseteq \mathcal{B}(G)$
with $|\Gamma_\delta| < \infty$ such that $\Gamma_\delta$ is a $\delta$-cover of $\mathscr{A}$ with respect to $\pi$.
Let $\varphi$ be an update function for $K$.
Then for any $x_0=x$ there exists a driver sequence $u_1, u_2, \ldots, u_n \in [0,1]^s$
such that
\begin{equation*}
D^\ast_{\mathscr{A}, \varphi}(\mathcal{U}_n) \le  \frac{8M}{1-\a}\frac{\sqrt{\log | \Gamma_\delta|}}{\sqrt{n}} + \delta + \frac{\alpha M}{n (1-\alpha)}.
\end{equation*}

Let $P_n=\{x_1,\dots,x_n\}$ given by
\begin{equation*}
 x_i =  \varphi(x_{i-1};u_i)  = \varphi_i(x_0;u_1,\dots,u_i), \quad i=1,\dots,n.
\end{equation*}
Then $P_n$ satisfies
\begin{align*}
D^\ast_{\mathscr{A}, \pi}(P_n)
&  \le \frac{8M}{1-\a}\frac{\sqrt{\log |\Gamma_\delta |}}{\sqrt{n}} + \delta + \frac{2 \alpha M}{n (1-\alpha)}.
\end{align*}
\end{corollary}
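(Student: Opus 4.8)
The plan is to combine the two results already established, so that essentially no new analytic work is required beyond careful bookkeeping between the two notions of discrepancy. First I would invoke Theorem~\ref{thm_main}: under exactly the hypotheses of the corollary (uniform ergodicity with $(\a,M)$ and the existence of finite $\delta$-covers $\Gamma_\delta$ of $\mathscr{A}$ with respect to $\pi$), it guarantees, for any fixed starting point $x_0=x$, the existence of a driver sequence $u_1,\dots,u_n\in[0,1]^s$ whose induced point set $P_n$ satisfies $D^\ast_{\mathscr{A},\pi}(P_n)\le \frac{8M}{1-\a}\frac{\sqrt{\log|\Gamma_\delta|}}{\sqrt{n}}+\delta$. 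I would fix this sequence once and for all; it is this single sequence that will be shown to satisfy both displayed bounds. For brevity write $R$ for the right-hand side of this inequality, so that $D^\ast_{\mathscr{A},\pi}(P_n)\le R$.

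Next I would pass from the point-set discrepancy to the push-back discrepancy by means of Corollary~\ref{coro: D_U_almost_D_P}, which for uniformly ergodic chains asserts $\left|D^\ast_{\mathscr{A},\pi}(P_n)-D^\ast_{\mathscr{A},\varphi}(\mathcal{U}_n)\right|\le \frac{\alpha M}{n(1-\alpha)}$. In particular $D^\ast_{\mathscr{A},\varphi}(\mathcal{U}_n)\le D^\ast_{\mathscr{A},\pi}(P_n)+\frac{\alpha M}{n(1-\alpha)}\le R+\frac{\alpha M}{n(1-\alpha)}$, which is precisely the first displayed inequality of the corollary.

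For the second displayed inequality I would apply the same estimate from Corollary~\ref{coro: D_U_almost_D_P} in the opposite direction, now bounding the point-set discrepancy by the push-back discrepancy just obtained: $D^\ast_{\mathscr{A},\pi}(P_n)\le D^\ast_{\mathscr{A},\varphi}(\mathcal{U}_n)+\frac{\alpha M}{n(1-\alpha)}\le R+\frac{2\alpha M}{n(1-\alpha)}$. This chaining back through the push-back discrepancy is exactly what produces the factor $2$ in the ergodicity-correction term $\frac{2\alpha M}{n(1-\alpha)}$; the corollary is deliberately phrased so that the result is expressed through the driver sequence as the central object, even though Theorem~\ref{thm_main} already supplies the sharper direct bound $R$ for $P_n$. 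I do not anticipate any genuine obstacle, since both ingredients are in hand and no new probabilistic argument is needed; the only point requiring care is the direction in which the triangle-type estimate of Corollary~\ref{coro: D_U_almost_D_P} is applied at each step, and the consequent doubling of the correction term in the second bound.
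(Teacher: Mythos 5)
Your proposal is correct and matches the paper's (implicit) argument exactly: the paper derives this corollary by applying Theorem~\ref{thm_main} to obtain the driver sequence and then chaining through Corollary~\ref{coro: D_U_almost_D_P} in both directions, which is precisely your bookkeeping, including the doubling of the $\frac{\alpha M}{n(1-\alpha)}$ term in the second bound. Your observation that the second inequality is deliberately routed through the push-back discrepancy (rather than using the sharper direct bound from Theorem~\ref{thm_main}) also agrees with the paper's stated purpose of presenting the push-back discrepancy as a sufficient criterion for constructing good driver sequences.
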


This corollary has two consequences. One is the existence of a driver sequence with small pull-back discrepancy. The second is that if one can construct such a sequence with small 
pull-back discrepancy,
then the Markov chain which one obtains using this driver sequence also has small discrepancy. 
Thus the pull-back discrepancy is a sufficient criterion for the construction of good driver sequences.

Theorem~\ref{thm_main} and Corollary~\ref{coro: push_back_push_forward}
depend on $\delta$ and the size of the $\delta$-cover $\Gamma_\delta$.
For a certain set of test sets we have the following result.
\begin{corollary}\label{cor_main_with_delta}
Let $K$ be a transition kernel with stationary distribution $\pi$
defined on a set $G \subseteq \mathbb{R}^d$. Assume that the transition kernel is uniformly
ergodic with $(\a,M)$.
 Let the set of test sets $\mathscr{A}\subseteq \mathcal{B}(G)$ be given by
 \[
 \mathscr{A} = \{ (-\infty,x)\cap G \mid x\in \bar{\R}^d\},
\]
where $\bar{\R}^d=(\R\cup\{ \infty,-\infty\})^d$.
Let $\varphi$ be an update function for $K$.
Then for any $x_0=x$ there exists a driver sequence $u_1, u_2, \ldots, u_n \in [0,1]^s$
and an absolute constant $c>0$
such that
\begin{equation*}
D^\ast_{\mathscr{A}, \varphi}(\mathcal{U}_n)
\le
\frac{8M}{1-\a}\frac{\sqrt{d\log(3+4c^2n)}}{\sqrt{n}}
+ \frac{\sqrt{d}}{\sqrt{n}} + \frac{\alpha M}{n (1-\alpha)}.
\end{equation*}
Let $P_n=\{x_1,\dots,x_n\}$ given by
\begin{equation*}
 x_i = \varphi(x_{i-1};u_i)  = \varphi_i(x_0; u_1,\dots,u_i), \quad i=1,\dots,n.
\end{equation*}
Then $P_n$ satisfies
\begin{align*}
D^\ast_{\mathscr{A}, \pi}(P_n) & \le
\frac{8M}{1-\a}\frac{\sqrt{d\log(3+4c^2n)}}{\sqrt{n}}
+ \frac{\sqrt{d}}{\sqrt{n}} + \frac{2\alpha M}{n (1-\alpha)}.
\end{align*}
\end{corollary}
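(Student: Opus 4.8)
The plan is to read this corollary as the concrete specialization of Corollary~\ref{coro: push_back_push_forward} to the corner test sets $\mathscr{A} = \{(-\infty,x)\cap G : x\in\bar{\R}^d\}$. Corollary~\ref{coro: push_back_push_forward} already delivers bounds of exactly the shape
\[
\frac{8M}{1-\a}\frac{\sqrt{\log|\Gamma_\delta|}}{\sqrt n} + \delta + \frac{\alpha M}{n(1-\alpha)}
\]
(and with $2\alpha M$ in place of $\alpha M$ for the push-forward discrepancy $D^\ast_{\mathscr{A},\pi}(P_n)$), provided a finite $\delta$-cover $\Gamma_\delta$ of $\mathscr{A}$ with respect to $\pi$ exists. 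So the entire content of the corollary reduces to (i) producing, for every $\delta>0$, a finite $\delta$-cover of $\mathscr{A}$ whose cardinality I can control, and (ii) choosing $\delta$ to balance the two leading terms. The terminal $\alpha M/(n(1-\alpha))$ and $2\alpha M/(n(1-\alpha))$ summands are then inherited verbatim from Corollary~\ref{coro: push_back_push_forward}.

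For the cover, I would first observe that $\mathscr{A}$ is the class of anchored lower-left corners intersected with $G$, which is a Vapnik--\v{C}ervonenkis class of dimension $d$. This is the crucial structural fact: by Haussler's bound~\cite{Ha95} on packing (hence covering) numbers of VC classes, the number of sets needed to cover $\mathscr{A}$ at scale $\delta$ with respect to $\pi$ grows like $(c'/\delta)^d$ with a \emph{dimension-free base} $c'$, up to a factor of lower order in $d$. I would then convert this $L^1(\pi)$ covering estimate into an honest bracketing $\delta$-cover $\Gamma_\delta$ in the sense of the definition above; this uses the monotonicity of corner sets and is the role of the appendix section on $\delta$-covers (cf.\ also \cite{Gn08}). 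The outcome I would record is a bound of the form $\log|\Gamma_\delta| \le d\,\log(c''/\delta) + (\text{lower order in }d)$.

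With the cover in hand I would set $\delta = \sqrt d/\sqrt n$. This choice makes the additive term $\delta$ equal to $\sqrt d/\sqrt n$, exactly the second summand in the claimed bounds. For the leading term, substituting $\delta=\sqrt d/\sqrt n$ into the cover estimate gives $\log|\Gamma_\delta| \le \frac{d}{2}\log((c'')^2 n/d) + (\text{lower order in }d)$, and the remaining task is purely arithmetic: absorb the $\log(2e)$-type constants and the $O(\log d)$ remainder into a single closed form, using $d\ge 1$ and $n\ge 1$, so that $\log|\Gamma_\delta| \le d\log(3+4c^2 n)$ for one absolute constant $c$. Feeding this into the $\frac{8M}{1-\a}\sqrt{\log|\Gamma_\delta|}/\sqrt n$ term of Corollary~\ref{coro: push_back_push_forward} produces $\frac{8M}{1-\a}\sqrt{d\log(3+4c^2 n)}/\sqrt n$, and both inequalities follow simultaneously.

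The main obstacle is the cover step, and more precisely the need for a dimension-free base. A naive quantile-grid construction only yields a $\delta$-cover of size $\sim(d/\delta)^d$, whose base $d/\delta$ carries an extra factor $d$; after setting $\delta=\sqrt d/\sqrt n$ this produces $\log|\Gamma_\delta|\lesssim \frac{d}{2}\log(dn)$, and the spurious $\log d$ inside \emph{cannot} be absorbed into $d\log(3+4c^2 n)$ uniformly in $d$, since for fixed $n$ it blows up as $d\to\infty$. Thus the VC/Haussler route is genuinely needed, and the delicate point is the clean passage from Haussler's packing and covering numbers to a bracketing $\delta$-cover of the stated cardinality, together with the verification that every lower-order term really does fit under the single expression $d\log(3+4c^2 n)$ for an absolute $c$.
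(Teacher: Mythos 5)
Your framing is exactly the paper's: the corollary is Corollary~\ref{coro: push_back_push_forward} applied to the corner test sets, plus a cardinality bound for a $\delta$-cover, with the choice $\delta=\sqrt{d}/\sqrt{n}$, and the tail terms $\alpha M/(n(1-\alpha))$, $2\alpha M/(n(1-\alpha))$ inherited verbatim. The gap is in your cover step, which is the entire technical content. You propose to take Haussler's $L^1(\pi)$ covering/packing bound for the VC class of corners and then ``convert'' it into a bracketing $\delta$-cover, asserting that this conversion ``is the role of the appendix section on $\delta$-covers.'' It is not: Lemma~\ref{lem_delta_cover_ex} never works with covering numbers. It invokes Proposition~\ref{prop: talagrand_haussler} (the Talagrand--Haussler consequence) to obtain $r\approx 4c^2 d\delta^{-2}$ points $y_1,\dots,y_r$ whose star-discrepancy with respect to $\pi$ is at most $\delta/2$, takes $\Gamma_\delta$ to be the coordinate grid generated by these points together with $\pm\infty$ (size $(2+r)^d$), and controls the bracket width $\pi(D\setminus C)\le\delta$ by the discrepancy of the $y_i$, because the two bracketing grid corners contain exactly the same points $y_i$. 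The conversion you need instead --- from small symmetric difference to honest inclusions $C\subseteq A\subseteq D$ --- is not routine, and you give no argument for it: if $A=(-\infty,(1,0.5))$ and $A_i=(-\infty,(0.5,1))$ in $\R^2$ and $\pi$ concentrates its mass near $(0.9,0.9)$, then $\pi(A\,\triangle\,A_i)=0$ while the natural lattice brackets $(-\infty,\min(x,x^{(i)}))\subseteq A\subseteq(-\infty,\max(x,x^{(i)}))$ satisfy $\pi(D\setminus C)=1$. Also \cite{Gn08} proves bracketing bounds for the Lebesgue measure on the cube, not for arbitrary $\pi$, so it cannot be cited for this step. As written, the crux of your proof is missing.

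A related miscalibration explains why you were driven to this harder route: you insist that a dimension-free base is indispensable, but the paper's cover bound $|\Gamma_\delta|\le(3+4c^2 d\delta^{-2})^d$ has $d$ in the base and suffices, because with $\delta=\sqrt{d}/\sqrt{n}$ the quantity $d\delta^{-2}$ collapses to $n$, giving $\log|\Gamma_\delta|\le d\log(3+4c^2n)$ exactly. Your impossibility remark applies to bases of the form $d/\delta$ (which leave a residual $\log d$), not to bases of the form $d/\delta^{2}$; and even the residual $\log d$ of a naive marginal-quantile grid is ultimately harmless, since for $d\ge n$ the claimed bounds exceed $1$ and hold trivially. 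The repair is therefore not to strengthen the covering estimate but to replace your covering-to-bracketing conversion by the paper's construction: low-discrepancy points with respect to $\pi$ from Proposition~\ref{prop: talagrand_haussler}, the coordinate grid they generate, and the discrepancy of those points as the control on the bracket width.
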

\begin{proof}
The result follows by Lemma~\ref{lem_delta_cover_ex} in Appendix~\ref{sec_delta_covers},
which shows the existence of $\delta$-covers with $$|\Gamma_\delta| \le (4 + 3 c^2 d \delta^{-2})^d,$$
and Corollary~\ref{coro: push_back_push_forward}.
\end{proof}

\subsection{Optimality of the Monte Carlo rate}\label{subsec_optimal}

We now show that the exponent $-1/2$ of $n$ in Theorem~\ref{thm_main} cannot be improved in general.
We do so by specializing Theorem~\ref{thm_main} to the sphere $\mathbb{S}^d$.
Recall that $\mathbb{S}^d = \{x \in \mathbb{R}^{d+1}: \|x\|_{2} = \langle x, x \rangle^{1/2} = 1\}$,
where $\langle x, y \rangle$ denotes the standard inner product in $\mathbb{R}^{d+1}$.
A spherical cap $C(x,t) \subseteq \mathbb{S}^d$ with center $x \in \mathbb{S}^d$ and $-1 \le t \le 1$ is given by \begin{equation*}
C(x,t) = \{y \in \mathbb{S}^d: \langle x, y \rangle > t\}.                                                                                                                                                                                                                                                                                                                                                                                                               \end{equation*}
Let $\mathcal{C} = \{C(x,t): x \in \mathbb{S}^d, -1 \le t \le 1\}$ be the test set of spherical caps of $\mathbb{S}^d$. The normalized area of a spherical cap $C(x,t)$ for
$0 \le t \le 1$ is given by
\begin{equation*}
\pi(C(x,t)) = \frac{1}{2} \frac{B(1-t^2; d/2,1/2)}{B(1; d/2,1/2)},
\end{equation*}
where $B$ is the incomplete beta function
\begin{equation*}
B(1-t^2; d/2, 1/2) = \int_0^{1-t^2} z^{d/2-1} (1-z)^{-1/2} \rd z.
\end{equation*}
Then the spherical cap discrepancy of a point set $P_n = \{x_1, x_2,\ldots, x_n\} \subseteq \mathbb{S}^d$ is given by
\begin{equation*}
D^\ast_{\mathbb{S}^d, \mathcal{C}}(P_n) = \sup_{C \in \mathcal{C}} \left|\frac{1}{n} \sum_{i=1}^n 1_{x_i \in C} - \pi(C) \right|.
\end{equation*}

The following result is an application of Theorem~\ref{thm_main}.
For a proof of the corollary we refer to Appendix~\ref{ss: coro_sphere}.

\begin{corollary}\label{cor_sphere}
There exists an absolute constant $c > 0$ independent of $n$ and $d$ such that for each $n$ and $d$ there exists a set of points $P_n = \{x_1, x_2,\ldots, x_n\} \subseteq \mathbb{S}^d$ such that the spherical cap discrepancy satisfies
\begin{equation*}
D^\ast_{\mathbb{S}^d, \mathcal{C}}(P_n) \le c\, \frac{\sqrt{d} + \sqrt{(d+1) \log n}}{\sqrt{n}}.
\end{equation*}
\end{corollary}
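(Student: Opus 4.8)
The strategy is to apply Theorem~\ref{thm_main} directly to the setting where $G = \mathbb{S}^d$, the stationary distribution $\pi$ is the normalized surface measure, and the set of test sets is $\mathscr{A} = \mathcal{C}$, the collection of spherical caps. To invoke Theorem~\ref{thm_main} I need two ingredients: first, a uniformly ergodic Markov chain on $\mathbb{S}^d$ whose stationary distribution is the uniform measure, and second, a bound on the size of a $\delta$-cover $|\Gamma_\delta|$ of the spherical caps $\mathcal{C}$ with respect to $\pi$. For the first ingredient I would simply take direct simulation as in Example~\ref{ex: simple_rem}, using a generator for the uniform distribution on $\mathbb{S}^d$; then $\alpha = 0$, so the tail term $\alpha M/(n(1-\alpha))$ vanishes and the constant $M/(1-\alpha)$ collapses to an absolute constant. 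This removes the dependence on the chain's ergodicity parameters entirely, which is exactly what the clean statement $c(\sqrt{d} + \sqrt{(d+1)\log n})/\sqrt{n}$ demands.

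The crux of the argument is the second ingredient: a bound on the $\delta$-cover number of $\mathcal{C}$. The key step is to establish that one can construct a $\delta$-cover $\Gamma_\delta$ of the spherical caps whose cardinality grows only polynomially in $\delta^{-1}$, with the polynomial degree governed by the dimension $d$. Since a spherical cap $C(x,t)$ is parametrized by the pair $(x,t) \in \mathbb{S}^d \times [-1,1]$, and small perturbations of this pair change the $\pi$-measure of the symmetric difference continuously, one expects $|\Gamma_\delta| \le (c'/\delta)^{d+1}$ for some absolute constant $c'$, reflecting the $(d+1)$-dimensional parameter space. Concretely I would discretize the parameters: cover the center directions $x$ by a net on $\mathbb{S}^d$ of appropriate resolution and the heights $t$ by a one-dimensional grid fine enough that adjacent caps differ in measure by at most $\delta$, then take $\Gamma_\delta$ to consist of the caps sandwiching each target cap from inside and outside. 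This yields $\log|\Gamma_\delta| \le (d+1)\log(c'/\delta)$.

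Feeding this into Theorem~\ref{thm_main} gives
\begin{equation*}
D^\ast_{\mathbb{S}^d, \mathcal{C}}(P_n) \le \frac{8M}{1-\alpha}\frac{\sqrt{(d+1)\log(c'/\delta)}}{\sqrt{n}} + \delta.
\end{equation*}
The final step is to optimize over $\delta$. Choosing $\delta = n^{-1/2}$ (or any comparable power) balances the two terms up to the logarithmic factor: the additive $\delta$ term contributes at the scale $n^{-1/2}$, while $\log(c'/\delta) = \log(c' n^{1/2})$ contributes the $\log n$ inside the square root of the first term. Tracking constants and absorbing the $\sqrt{d}$ contributions (one from the $\delta = n^{-1/2}\sqrt{d}$-type choice that produces the standalone $\sqrt{d}/\sqrt{n}$ term, one from the $\sqrt{d+1}$ in the cover bound) then yields the claimed form $c(\sqrt{d} + \sqrt{(d+1)\log n})/\sqrt{n}$ with $c$ absolute. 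The main obstacle is the $\delta$-cover cardinality estimate for spherical caps: making the geometric covering argument rigorous with the correct dimensional dependence, and in particular verifying that the measure of the symmetric difference between a target cap and its nearest lattice cap is controlled linearly by the net resolution, uniformly over all heights $t$ including those near the poles where the incomplete beta function behaves nonuniformly. This is where I would expect the bulk of the technical work, and it is presumably carried out in the appendix lemma the authors cite.
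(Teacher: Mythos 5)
Your overall strategy coincides with the paper's: specialize Theorem~\ref{thm_main} to direct simulation on $\mathbb{S}^d$ as in Example~\ref{ex: simple_rem} (so $\alpha=0$, $M=1$ and the ergodicity terms disappear), build a $\delta$-cover of the spherical caps by discretizing centers and heights, and choose $\delta$ of order $\sqrt{d}\,n^{-1/2}$; this is exactly what is done in Appendix~\ref{ss: coro_sphere}.

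The genuine gap is in the step you deferred, and it is not mere bookkeeping. You assert a cover bound $|\Gamma_\delta|\le(c'/\delta)^{d+1}$ with $c'$ absolute, on the grounds that the cap measure responds \emph{linearly} to the net resolution, uniformly over all heights $t$. That claim is false as stated: for $0<t<1$,
\begin{equation*}
\frac{\partial}{\partial t}\,\pi\bigl(C(x,t)\bigr)
= -\,\frac{(1-t^2)^{d/2-1}}{B(1;d/2,1/2)},
\qquad
B(1;d/2,1/2)=\frac{\sqrt{\pi}\,\Gamma(d/2)}{\Gamma((d+1)/2)}\asymp d^{-1/2},
\end{equation*}
so the Lipschitz constant of $t\mapsto\pi(C(x,t))$ is of order $\sqrt{d}$, never absolute; worse, for $d=1$ the derivative is unbounded as $t\to\pm1$, so linear control fails outright near the poles --- precisely the regime you flagged as needing verification. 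For $d\ge2$ your argument can be repaired by taking net resolution $\delta/\sqrt{d}$ (the extra $\sqrt{d}$ is indeed absorbed by the choice $\delta=\sqrt{d}\,n^{-1/2}$, as you anticipate), but $d=1$ still needs a different estimate. The paper's Lemma~\ref{lem_delta_cover_sphere} sidesteps the issue entirely: the centers come from an equal-area partition of the sphere, the sandwiching inclusions are verified algebraically via \eqref{eq_subsets}, and the measure of the difference of two caps is bounded by the square-root estimate $\sqrt{35/M}/B(1;d/2,1/2)$ arising from the $(1-z)^{-1/2}$ singularity of the incomplete beta integrand. That estimate is uniform over all $d\ge1$ and all heights, and it costs only a doubling of the exponent, $|\Gamma_\delta|\le c\,d^{d+1}\delta^{-2(d+1)}$, which is harmless because only $\sqrt{\log|\Gamma_\delta|}$ enters Theorem~\ref{thm_main}. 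With the square-root estimate (or a separate treatment of caps near the poles when $d=1$) in place of your claimed linear one, your plan does deliver the corollary.
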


We have shown the existence of points on $\mathbb{S}^d$
for which the spherical cap discrepancy is of order $\sqrt{\frac{\log n}{n}}$.
Since this result follows by specializing Theorem~\ref{thm_main} to the sphere,
any improvement of the exponent $-1/2$ of $n$ in Theorem~\ref{thm_main} would
yield an improvement of the exponent of $n$ in Corollary~\ref{cor_sphere}.
 However, it is known that the spherical cap discrepancy of any point set is at least $n^{-1/2 - 1/(2d)}$,
see \cite{Beck}. Thus, an exponent smaller than $-1/2$ in Corollary~\ref{cor_sphere}
would yield a contradiction to the lower bound on the spherical cap discrepancy for large enough $d$.
Thus, at this level of generality, the exponent of $n$ in Theorem~\ref{thm_main} cannot be improved.

We point out that a bound on the spherical cap discrepancy
can also be deduced from \cite[Theorem~4]{HeNoWaWo01}
by using a bound on the Vapnik-\v{C}ervonenkis dimension for $\mathcal{C}$.

\begin{appendix}

\section{Proof of Proposition~2}
\label{app: proof}
Since the statement and the assumptions in Proposition~\ref{prop: Hoeffd} 
are slightly different from those in \cite[Theorem~2]{GlOr02}, 
we prove the desired Hoeffding inequality by following
the arguments in \cite{GlOr02}.
We set $f(x)=1_{x\in A} - \pi(A)$ and obtain by the uniform ergodicity
\begin{align*}
 P^n f(x) = K^n(x,A) -\pi(A) \leq \min\{1,\alpha^n M\}.
\end{align*}
Let $g(x) = \sum_{n=0}^\infty P^n f(x)$ and note that $\norm{g}_\infty \leq M/(1-\alpha)$.
Here let us mention that even
\[
 \norm{g}_\infty 
 \leq 1+(1-\alpha)^{-1}+\frac{\log M}{\log \alpha^{-1}},
\]
which improves upon the dependence of $M$.
We have $g(x)-Pg(x) = f(x)$ for all $x\in G$. 
Let $D(X_{j-1},X_j) = g(X_j)-Pg(X_{j-1})$ for $j=2,\dots,n+1$,
then
\begin{align*}
& \quad \sum_{j=1}^n 1_{X_j\in A} - n\cdot \pi(A)
   = \sum_{j=1}^n f(X_j)
   = \sum_{j=1}^n (g(X_j)-Pg(X_j))\\
&  = g(X_1)-g(X_{n+1})+\sum_{j=2}^{n+1} D(X_{j-1},X_j)
   \leq 2\norm{g}_\infty + \sum_{j=2}^{n+1} D(X_{j-1},X_j).
\end{align*}
Thus, for $\theta > 0$ we obtain
\begin{align*}
 & \;\mathbb{E}_{x,K} \left[ \exp\left(\theta \sum_{j=1}^n 1_{X_j\in A} - \theta n\cdot \pi(A)\right)\right]\\
 & \leq \exp(2\theta\norm{g}_
 \infty)\;\mathbb{E}_{x,K} \left[\exp\left(\theta\sum_{j=2}^{n+1} D(X_{j-1},X_j)\right)\right]
\end{align*}
and
\begin{align*}
&\;\mathbb{E}_{x,K} \left[\exp\left(\theta\sum_{j=2}^{n+1} D(X_{j-1},X_j)\right)\right]\\
 =& \;\mathbb{E}_{x,K} \left[\exp\left(\theta\sum_{j=2}^{n} D(X_{j-1},X_j)\right) \cdot 
 \mathbb{E}_{x,K}(\exp(\theta D(X_n,X_{n+1}))\mid X_1,\dots,X_{n}) \right].
\end{align*}
 By virtue of \cite[Lemma~8.1]{Devetal96}, see also \cite[Equation (5)]{GlOr02}, we obtain
 \begin{align} \label{al: iter}
   & \;\mathbb{E}_{x,K}(\exp(\theta D(X_n,X_{n+1}))\mid X_1,\dots,X_{n}) \leq \exp(\theta^2\norm{g}_\infty^2 /2).
\end{align}
This can be repeated iteratively so that
\[
\mathbb{E}_{x,K} \left[ \exp\left(\theta \sum_{j=1}^n 1_{X_j\in A} - \theta n\cdot \pi(A)\right)\right]
\leq \exp(2\theta\norm{g}_\infty+n\theta^2\norm{g}_\infty^2 /2).
\]
Markov's inequality leads to
\[
 \mathbb{P}_{x,K} \left[ \frac{1}{n} \sum_{j=1}^n 1_{X_j\in A} - \pi(A) \geq c \right]
 \leq \exp(-\theta nc+2\theta\norm{g}_\infty+ n \theta^2\norm{g}_\infty^2 /2).
\]
The bound is best possible
for 
\[
\theta = \frac{nc-2\norm{g}_\infty}{n\norm{g}_\infty^2}. 
\]
Finally by $\norm{g}_\infty \leq M/(1-\alpha)$
and by repeating this analysis 
with $f=\pi(A)-1_{x\in A}$
we obtain the assertion.

\section{Delta-covers}\label{sec_delta_covers}

We need a deep result of the theory of empirical processes which follows from Talagrand~\cite[Theorem~6.6]{Ta94} and Haussler~\cite[Corollary~1]{Ha95}. For a more general version see also \cite[Theorem~4]{HeNoWaWo01}.

\begin{proposition}  \label{prop: talagrand_haussler}
 There exists an absolute constant $c>0$ such that for each cumulative distribution function $L$ on $(G,\mathcal{B}(G))$
 the following holds:
 For all $r\in \NN$ there exist $y_1,\dots,y_r \in G$ with
 \[
  \sup_{x\in \mathbb{Q}^d} \abs{ L(x) - \frac{1}{r} \sum_{i=1}^r 1_{y_i\in (-\infty,x)_G}  } \leq c\, \sqrt{d}\, r^{-1/2}.
 \]
\end{proposition}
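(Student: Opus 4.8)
The plan is to combine the probabilistic method with the theory of empirical processes over Vapnik--\v{C}ervonenkis classes. Let $\mu$ be the probability measure on $(G,\mathcal{B}(G))$ associated with the distribution function $L$, so that $L(x)=\mu((-\infty,x)_G)$, and let $Y_1,\dots,Y_r$ be independent, each with law $\mu$. For this random sample the quantity
\[
 D_r := \sup_{x\in\mathbb{Q}^d}\abs{L(x)-\frac{1}{r}\sum_{i=1}^r 1_{Y_i\in(-\infty,x)_G}}
\]
is precisely the star-discrepancy of $\{Y_1,\dots,Y_r\}$ relative to $\mu$ over the anchored test sets, and it is the supremum of the (centered) empirical process indexed by these sets. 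Because the index set $\mathbb{Q}^d$ is countable, $D_r$ is automatically a measurable random variable, so no outer-expectation technicalities arise. It therefore suffices to prove $\E[D_r]\le c\,\sqrt{d}\,r^{-1/2}$: if the average over the sample is this small, then at least one realization $(y_1,\dots,y_r)$ satisfies the stated bound, which is the assertion.

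First I would pin down the combinatorial complexity of the index class. The family of open lower orthants $\{(-\infty,x):x\in\mathbb{R}^d\}$ has Vapnik--\v{C}ervonenkis dimension exactly $d$, and passing to the traces $\{(-\infty,x)_G:x\in\mathbb{Q}^d\}$ on $G$ with rational anchors can only decrease the shatter coefficients. Thus $d$ is the single source of dimension dependence in the whole argument.

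Next I would feed this into the two quoted results. Haussler's Corollary~1 gives a sharp, measure-uniform bound on the packing numbers of a VC class of dimension $d$: the $\varepsilon$-packing number in the $L^2(\nu)$ pseudometric is of order $\varepsilon^{-2d}$ up to an absolute constant (using $\norm{1_A-1_B}_{2}^2=\norm{1_A-1_B}_{1}$ for indicators), so that $\sqrt{\log N(\varepsilon)}$ is of order $\sqrt{d}\,\sqrt{\log(1/\varepsilon)}$. Since the class has diameter at most $1$ and $\int_0^1\sqrt{\log(1/\varepsilon)}\,\rd\varepsilon<\infty$, inserting this entropy estimate into Talagrand's Theorem~6.6 on suprema of empirical processes bounds the expected supremum of the normalized process $\sqrt{r}\,(L(x)-\frac{1}{r}\sum_{i=1}^r 1_{Y_i\in(-\infty,x)_G})$ by an absolute constant times $\sqrt{d}$, with no residual dependence on $r$. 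Dividing by $\sqrt{r}$ yields $\E[D_r]\le c\,\sqrt{d}\,r^{-1/2}$.

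The main obstacle is the last step: extracting the clean factor $\sqrt{d}$ rather than a factor $d$ or a spurious factor logarithmic in $d$. This is exactly where the sharpness of Haussler's packing bound must be married to Talagrand's chaining estimate, and it is the technical heart of the linear-in-dimension phenomenon. Rather than reconstruct the chaining argument, I would invoke the packaged general-distribution version in Heinrich, Novak, Wasilkowski and Wo\'zniakowski~\cite[Theorem~4]{HeNoWaWo01}, which states precisely this bound for an arbitrary measure $\mu$ and delivers the inequality with the stated absolute constant $c$ directly.
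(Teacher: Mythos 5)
Your proposal is correct and takes essentially the same route as the paper: the paper states this proposition without its own proof, presenting it as a consequence of Talagrand~\cite[Theorem~6.6]{Ta94} combined with Haussler's packing bound~\cite[Corollary~1]{Ha95}, with \cite[Theorem~4]{HeNoWaWo01} cited as the packaged general-measure version. Your sketch (probabilistic method over an i.i.d.\ sample from the measure inducing $L$, VC dimension $d$ of the anchored orthants, Haussler's entropy estimate fed into Talagrand's chaining bound, or alternatively invoking \cite[Theorem~4]{HeNoWaWo01} directly) is exactly this argument made explicit.
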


In this subsection we study $\delta$-covers in $G$
with respect to the probability measure $\pi$.

\begin{lemma}\label{lem_delta_cover_ex}
Let $G \subseteq \mathbb{R}^d$
and let $(G, \mathcal{B}(G), \pi)$ be a probability space where $\mathcal{B}(G)$
is the Borel $\sigma$-algebra of $G$.
Assume that $\pi$ is absolutely continuous with respect to
the Lebesgue measure.
Define the set $\mathscr{A} \subseteq \mathcal{B}(G)$
of test sets by
$$\mathscr{A} = \{(-\infty, x)_G: x \in \bar{\mathbb{R}}^d\}.$$
Then for any $\delta > 0$ there exists a $\delta$-cover $\Gamma_\delta$ of $\mathscr{A}$
with $$|\Gamma_\delta| \le (3 + 4 c^2 d \delta^{-2})^d,$$
where $c>0$ is an absolute constant.
\end{lemma}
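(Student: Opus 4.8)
The goal is to construct, for each $\delta > 0$, a finite $\delta$-cover $\Gamma_\delta$ of the test sets $\mathscr{A} = \{(-\infty,x)_G : x \in \bar{\mathbb{R}}^d\}$ with cardinality at most $(3 + 4c^2 d \delta^{-2})^d$. The plan is to reduce the problem to a one-dimensional coordinate-wise construction, exploiting the product structure of the boxes $(-\infty, x)_G = \prod_{j=1}^d (-\infty, \xi_j) \cap G$. The key observation is that a box is an increasing (in each coordinate) set, so if I can discretize each coordinate axis into finitely many breakpoints in such a way that the $\pi$-measure caught between consecutive breakpoints is controlled, then the grid of boxes formed from these breakpoints will sandwich every box in $\mathscr{A}$.

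First I would fix a coordinate $j$ and construct a finite increasing sequence of thresholds $-\infty = t_{j,0} < t_{j,1} < \cdots < t_{j,N} = +\infty$ along the $j$-th axis. To do this efficiently I would invoke Proposition~\ref{prop: talagrand_haussler}: applying it to the one-dimensional marginal cumulative distribution function of $\pi$ in coordinate $j$ yields, for a suitable choice of $r$, points whose empirical distribution approximates the marginal c.d.f.\ uniformly to within $c\sqrt{1}\,r^{-1/2}$ (in dimension one, $d=1$). Taking $r$ of order $4c^2 \delta^{-2}$ makes this uniform error at most $\delta/2$, and sorting these $r$ points together with $\pm\infty$ gives roughly $r+2$ thresholds per axis, each pair of consecutive thresholds trapping marginal $\pi$-mass at most $\delta$. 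The number of thresholds per coordinate is thus bounded by something like $3 + 4c^2\delta^{-2}$, and forming the full grid across all $d$ coordinates gives $|\Gamma_\delta| \le (3 + 4c^2 d \delta^{-2})^d$; the extra factor of $d$ inside arises because to control the $d$-dimensional measure $\pi(D\setminus C)$ by a union bound over the $d$ coordinate slabs I must sharpen each marginal slab to mass $\delta/d$, which replaces $r \sim 4c^2\delta^{-2}$ by $r \sim 4c^2 d \delta^{-2}$.

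Next I would verify the $\delta$-cover property. Given an arbitrary box $A = (-\infty,x)_G$ with $x = (\xi_1,\dots,\xi_d)$, for each coordinate $j$ I choose the largest threshold $t_{j,\ell_j} \le \xi_j$ and the smallest threshold $t_{j,m_j} \ge \xi_j$, and set $C = \prod_j (-\infty, t_{j,\ell_j})\cap G$ and $D = \prod_j (-\infty, t_{j,m_j})\cap G$. By monotonicity of boxes we have $C \subseteq A \subseteq D$, and both $C,D$ lie in the finite grid $\Gamma_\delta$. The difference $D\setminus C$ is contained in the union over $j$ of the slabs where the $j$-th coordinate lies between $t_{j,\ell_j}$ and $t_{j,m_j}$; each such slab has $\pi$-measure at most $\delta/d$ by construction, so $\pi(D\setminus C) \le \sum_{j=1}^d \delta/d = \delta$ by the union bound.

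The main obstacle I anticipate is \emph{not} the geometric sandwiching, which is routine, but extracting the \emph{marginal} one-dimensional approximation from Proposition~\ref{prop: talagrand_haussler} and converting a uniform approximation of the c.d.f.\ into a genuine partition of the axis into slabs of small $\pi$-mass. Specifically, an $L^\infty$ bound on $|L(t) - \widehat{L}_r(t)|$ gives control of the c.d.f.\ values at the sample points, but I must be careful to insert thresholds so that the \emph{gap in true $\pi$-mass} between consecutive thresholds, not merely the empirical discrepancy, is at most $\delta/d$; this requires choosing the thresholds at (or just after) the empirical quantiles and then bounding the true mass of each resulting interval using the uniform closeness of $L$ to $\widehat{L}_r$. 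Getting the constants to line up to the stated exponent form $(3 + 4c^2 d\delta^{-2})^d$ (in particular the additive $3$ accounting for the endpoints $\pm\infty$ and the rounding in the threshold count) is the delicate bookkeeping step, and I would track it carefully rather than absorbing it into an unspecified constant.
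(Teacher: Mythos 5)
Your coordinate-wise strategy is genuinely different from the paper's proof, and the difference is precisely where your argument fails quantitatively. The paper applies Proposition~\ref{prop: talagrand_haussler} \emph{once, in dimension $d$, to $\pi$ itself}, forms the grid from the coordinates of the resulting $r$ points $y_1,\dots,y_r$, and then verifies the cover property \emph{without any union bound over coordinates}: the bracketing grid boxes are chosen so that $(-\infty,x]_G$, $(-\infty,z)_G$ and $(-\infty,y)_G$ contain exactly the same points of $\{y_1,\dots,y_r\}$, so their $\pi$-measures differ by at most twice the $d$-dimensional discrepancy, i.e.\ by at most $2c\sqrt{d}\,r^{-1/2}\le\delta$. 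Because $\delta$ is never split among the $d$ coordinates, $r\approx 4c^2d\delta^{-2}$ suffices, giving $(2+r)^d\le(3+4c^2d\delta^{-2})^d$. In your scheme, by contrast, the union bound forces \emph{every} marginal slab between consecutive thresholds to have mass at most $\delta/d$. With the one-dimensional version of Proposition~\ref{prop: talagrand_haussler} (uniform error $c r^{-1/2}$; consecutive empirical quantiles trap true mass at most $1/r+2cr^{-1/2}$), this requires $cr^{-1/2}\lesssim \delta/(2d)$, i.e.\ $r\gtrsim 4c^2d^2\delta^{-2}$ per axis --- not $r\sim 4c^2d\delta^{-2}$ as you claim. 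Your construction therefore yields $|\Gamma_\delta|\le\bigl(3+\mathcal{O}(d^2\delta^{-2})\bigr)^d$, and since $d$ is a free parameter the extra factor of $d$ cannot be absorbed into the absolute constant $c$; the stated bound is not proved. The missing idea is exactly the paper's count-matching argument, which is what lets the full-dimensional discrepancy control the bracket gap in one stroke.

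Two further remarks. First, if one does go coordinate-wise, Proposition~\ref{prop: talagrand_haussler} is the wrong tool altogether: placing thresholds directly at the marginal $\pi$-quantiles with mass spacing $\delta/d$ needs only $\mathcal{O}(d/\delta)$ thresholds per axis and would give a cover of size $\bigl(\mathcal{O}(d/\delta)\bigr)^d$, so your route is simultaneously weaker than claimed (with Talagrand--Haussler) and weaker than necessary (compared with elementary quantiles). Second, your slabs $\{x: t_{j,\ell_j}\le x_j< t_{j,m_j}\}$ contain their left face, so if a marginal of $\pi$ carries an atom of mass exceeding $\delta/d$, no placement of finitely many thresholds makes all such slabs small while the lower bracket $C$ remains an \emph{open} box; "choosing thresholds at or just after the empirical quantiles" does not repair this. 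The clean fix is to admit closed boxes $\prod_{j}(-\infty,t_j]_G$ into $\Gamma_\delta$, which is legitimate because Theorem~\ref{thm_main} only requires $\Gamma_\delta\subseteq\mathcal{B}(G)$ rather than $\Gamma_\delta\subseteq\mathscr{A}$ (the paper's own proof also implicitly relies on the closed box $(-\infty,x]_G$ at this point). Even with that repair, however, the quantitative gap described above remains.
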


\begin{proof}
Let $\delta > 0$ be given and let $r \in \mathbb{N}$ be the smallest integer such that $2 c \sqrt{d} r^{-1/2} \le \delta$. By Proposition~\ref{prop: talagrand_haussler} there are points $y_1, \ldots, y_r \in G$ such that
\begin{equation}\label{discrepancy_pts}
\sup_{x \in \bar{\mathbb{R}}^d} \left|\pi((-\infty, x)_G) - \frac{1}{r} \sum_{i=1}^r 1_{y_i \in (-\infty, x)_G} \right|
\le c \, \sqrt{d} \, r^{-1/2} \le \frac{\delta}{2}.
\end{equation}
Let $y_i = (\eta_{i,1}, \ldots, \eta_{i,d})$. We now define the set
\begin{equation*}
\Gamma_\delta = \left\{\prod_{j=1}^d (-\infty, z_j)\cap G: z_j \in \{-\infty, \infty, \eta_{1,j}, \eta_{2,j} \ldots, \eta_{r,j} \} \mbox{ for } 1 \le j \le d \right\}.
\end{equation*}
The cardinality of $\Gamma_\delta$ satisfies
\begin{equation*}
|\Gamma_\delta| = (2 + r)^{d} \le (3 + 4 c^2 d \delta^{-2})^d.
\end{equation*}
It remains to show that $\Gamma_\delta$ is a $\delta$-cover of $\mathscr{A}$.

Let $z \in \bar{\R}^d$ be arbitrary.
Then there exist $(-\infty, x)_G, (-\infty, y)_G \in \Gamma_\delta$ such that
\[
 (-\infty, x]_G \subseteq (-\infty, z)_G \subseteq (-\infty, y)_G
\]
and
\begin{align*}
 (-\infty, x]_G \cap \{y_1,\ldots, y_r\}
& = (-\infty, z)_G \cap \{y_1,\ldots, y_r\}\\
& = (-\infty, y)_G \cap \{y_1,\ldots, y_r\}.
\end{align*}
Using \eqref{discrepancy_pts} we obtain
\begin{align*}
& \quad\; \pi((-\infty, y)_G \setminus (-\infty, x)_G)  \\
& \le  \left|\pi((-\infty, y)_G) - \frac{1}{r} \sum_{i=1}^r 1_{y_i \in (-\infty, y)_G}\right|
+ \left|\pi((-\infty, x]_G) - \frac{1}{r} \sum_{i=1}^r 1_{y_i \in (-\infty, x]_G}\right| \\
& \le  \delta.
\end{align*}
Thus $\Gamma_\delta$ is a $\delta$-cover.
\end{proof}

\section{Integration error}\label{sec_int_error}

In Appendix~\ref{sec_delta_covers} we considered test sets which are intersections of boxes with the state space $G$. We define a reproducing kernel $Q$ by
\begin{equation*}
Q(x,y) = 1 + \int_{\mathbb{R}^d} 1_{(-\infty, z)_G}(x) 1_{(-\infty, z)_G}(y) \rho(\rd z),
\end{equation*}
where $\rho$ is a  measure on ${\mathbb{R}^d}$
with $\int_{\mathbb{R}^d} \rho(\rd z) < \infty$.
The function $Q$ is symmetric $Q(x,y) = Q(y,x)$
and positive semi-definite, that is,
for any $x_1,\ldots, x_n \in G$ and complex
numbers $b_1,\ldots, b_n \in \mathbb{C}$ we have
\begin{equation*}
\sum_{k, \ell = 1}^n b_k \overline{b_\ell} Q(x_k, x_\ell)
= \left|\sum_{k=1}^n b_k\right|^2 + \int_{\mathbb{R}^d}
\left|\sum_{k=1}^n b_k 1_{(-\infty, z)_G}(x_k) \right|^2 \rho(\rd z) \ge 0,
\end{equation*}
where $\overline{b_\ell}$ denotes the complex conjugate of $b_\ell$.
Thus $Q$ uniquely defines a reproducing kernel Hilbert space $H_2 = H_2(Q)$ of functions defined on $G$.
See \cite{Ar50} for more information on reproducing kernels and reproducing kernel Hilbert spaces. In fact, the functions $f$ in $H_2$ permit the representation
\begin{equation}\label{eq_f_rep}
f(x) = f_0 + \int_{\mathbb{R}^d} 1_{(-\infty, z)_G}(x) \widetilde{f}(z) \rho(\rd z),
\end{equation}
for some $f_0 \in \mathbb{C}$ and function
$\widetilde{f} \in L_2({\mathbb{R}^d}, \rho)$,
which can for instance be shown using the same arguments
as in \cite[Appendix~A]{BrDi13}. The inner product in $H_2$ is given by
\begin{equation*}
\langle f, g \rangle = f_0 \overline{g_0}
+ \int_{\mathbb{R}^d} \widetilde{f}(z) \overline{\widetilde{g}(z)} \rho(\rd z).
\end{equation*}
With these definitions we have the reproducing property
\begin{equation*}
\langle f, Q(\cdot, y)\rangle = f_0
+ \int_{\mathbb{R}^d} \widetilde{f}(z) 1_{(-\infty, z)_G}(y) \rho(\rd z) = f(y).
\end{equation*}

For $1 \le q \le \infty$ we also
define the space $H_q$ of functions of
the form \eqref{eq_f_rep} for which
$\widetilde{f} \in L_q({\mathbb{R}^d}, \rho)$, with norm
\begin{equation*}
\|f\|_{H_q} = \left(|f_0|^q
+ \int_{\mathbb{R}^d} |\widetilde{f}(z)|^q \rho(\rd z) \right)^{1/q}.
\end{equation*}

We provide a simple example.
\begin{example}
Let $G = [0,1]$ and let $\rho$ be the Lebesgue measure, then
\begin{equation*}
Q(x,y) = 1 + \int_0^1 1_{[0,z)}(x) 1_{[0,z)}(y) \rd z = 1 + \min\{1-x, 1-y\}.
\end{equation*}
The function $\widetilde{f} = -f'$,
where $f'$ is the usual derivative of $f$,
and \eqref{eq_f_rep} is then
$f(x) = f_0 + \int_{0}^1 1_{[0,z)}(x) \widetilde{f}(z) \rd z = f_0 -\int_x^1 f'(z) \rd z = f_0 + f(x) - f(1)$.
Thus $f(1) = f_0$ and $H_q$ is the space of all absolutely continuous functions $f$ for which $f' \in L_q([0,1],\rho)$.
\end{example}

We have the following result concerning the integration error in $H_q$.

\begin{theorem}\label{thm_int_error}
Let $G \subseteq \mathbb{R}^d$ and $\pi$ be a probability measure on $G$.
Further let $\mathscr{A}  = \{(-\infty, x)_G: x \in G\}$.
We assume that $1 \le p, q \le \infty$ with $1/p + 1/q = 1$.
Then for $P_n=\{x_1, x_2, \ldots, x_n \} \subseteq G$ and for all $f \in H_q$ we have
\begin{equation*}
\left|\int_G f(z) \pi(\rd z) - \frac{1}{n} \sum_{i=1}^n f(x_i)\right| \le \|f\|_{H_q} D^\ast_{p, \mathscr{A}, \pi}(P_n),
\end{equation*}
where
\begin{equation*}
D^\ast_{p, \mathscr{A}, \pi}(P_n)
= \left( \int_{\mathbb{R}^d} \left| \int_G  1_{(-\infty, z)_G}(y) \pi(\rd y)  - \frac{1}{n} \sum_{i=1}^n 1_{(-\infty, z)_G}(x_i) \right|^p \rho(\rd z) \right)^{1/p},
\end{equation*}
and for $p=\infty$ let
\begin{equation*}
D^\ast_{\mathscr{A}, \pi}(P_n)
:= D^\ast_{\infty, \mathscr{A}, \pi}(P_n) = \sup_{z \in G}
\left| \int_G  1_{(-\infty, z)_G}(y) \pi(\rd y)  - \frac{1}{n} \sum_{i=1}^n 1_{(-\infty, z)_G}(x_i) \right|.
\end{equation*}
\end{theorem}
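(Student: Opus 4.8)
The plan is to exploit the integral representation \eqref{eq_f_rep} of functions in $H_q$ to rewrite the integration error as a single linear functional applied to $\widetilde{f}$, and then to estimate it by H\"older's inequality. First I would substitute $f(x) = f_0 + \int_G 1_{(-\infty,z)_G}(x)\widetilde{f}(z)\,\rho(\rd z)$ into both $\int_G f(z)\,\pi(\rd z)$ and $\frac{1}{n}\sum_{i=1}^n f(x_i)$. Since $\pi$ is a probability measure and the sample points carry equal weight $1/n$, the constant $f_0$ contributes $f_0$ to each of the two terms and therefore cancels in the difference.

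The second step is to interchange the order of integration (Fubini--Tonelli) in each term, so that the $\pi$-integration and the finite averaging act on the indicator $1_{(-\infty,z)_G}$ while $\widetilde{f}(z)$ is pulled outside. This produces
\[
\int_G f(z)\,\pi(\rd z) - \frac{1}{n}\sum_{i=1}^n f(x_i)
= \int_G \widetilde{f}(z)\, g(z)\, \rho(\rd z),
\]
where $g(z) = \int_G 1_{(-\infty,z)_G}(y)\,\pi(\rd y) - \frac{1}{n}\sum_{i=1}^n 1_{(-\infty,z)_G}(x_i)$ is precisely the local discrepancy function appearing in the definition of $D^\ast_{p, \mathscr{A}, \pi}(P_n)$.

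Third, I would apply H\"older's inequality with conjugate exponents $q$ and $p$ to the $\rho$-integral of the product $\widetilde{f}\,g$, obtaining
\[
\left| \int_G \widetilde{f}(z)\, g(z)\, \rho(\rd z) \right|
\le \left( \int_G |\widetilde{f}(z)|^q \, \rho(\rd z) \right)^{1/q} \left( \int_G |g(z)|^p \, \rho(\rd z) \right)^{1/p}.
\]
The second factor equals $D^\ast_{p, \mathscr{A}, \pi}(P_n)$ by definition, and the first factor is at most $\|f\|_{H_q}$ because $\|f\|_{H_q}^q = |f_0|^q + \int_G |\widetilde{f}(z)|^q\,\rho(\rd z)$. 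For the case $p=\infty$, $q=1$, the same computation with the essential supremum of $|g|$ in place of its $L_p$-norm yields the factor $D^\ast_{\infty, \mathscr{A}, \pi}(P_n)$, which completes the claimed bound.

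The only point requiring genuine care is the justification of the interchange of integration, and I expect this to be the main (though minor) obstacle. It is routine to verify the integrability hypothesis: the indicators are bounded by $1$, the measure $\rho$ is finite with $\int_G \rho(\rd z) < \infty$, $\pi$ is a probability measure, and $\widetilde{f} \in L_q(G,\rho) \subseteq L_1(G,\rho)$ since $\rho(G) < \infty$. Hence $|\widetilde{f}(z)|\,1_{(-\infty,z)_G}(y)$ is integrable with respect to the relevant product measures in each term, and Fubini's theorem applies.
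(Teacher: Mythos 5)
Your proposal is correct and follows essentially the same route as the paper's proof: both reduce the error to $\int_G \widetilde{f}(z)\,\widetilde{h}(z)\,\rho(\rd z)$, where $\widetilde{h}$ is the local discrepancy function, and then apply H\"older's inequality together with $\bigl(\int_G |\widetilde{f}|^q\,\rho(\rd z)\bigr)^{1/q} \le \|f\|_{H_q}$. The only cosmetic difference is that the paper packages this identity via the kernel representer $h(x) = \int_G Q(x,y)\,\pi(\rd y) - \frac{1}{n}\sum_{i=1}^n Q(x,x_i)$, whereas you substitute the representation \eqref{eq_f_rep} directly and justify the interchange of integration explicitly, which is the same underlying computation.
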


\begin{proof}
Let
\begin{equation*}
e(f,P_n) = \int_G f(z) \pi(\rd z) - \frac{1}{n} \sum_{i=1}^n f(x_i)
\end{equation*}
denote the quadrature error when approximating the integral $\int_G f(z) \pi(\rd z)$ by $\frac{1}{n} \sum_{i=1}^n f(x_i)$ where $P_n = \{x_1, x_2, \ldots, x_n\}$.

Let $h(x) = \int_{G} Q(x, y) \pi(\rd y) - \frac{1}{n} \sum_{i=1}^n Q(x, x_i)$, then we have
\begin{align*}
h(x) = \int_{\mathbb{R}^d} 1_{(-\infty, z)_G}(x) \left( \int_G  1_{(-\infty, z)_G}(y) \pi(\rd y)  - \frac{1}{n} \sum_{i=1}^n 1_{(-\infty, z)_G}(x_i) \right) \rho(\rd z)
\end{align*}
and therefore $h \in H_p$ for any $1 \le p \le \infty$. Let \begin{equation*}
\widetilde{h}(z) = \int_G  1_{(-\infty, z)_G}(y) \pi(\rd y)  - \frac{1}{n} \sum_{i=1}^n 1_{(-\infty, z)_G}(x_i).
\end{equation*}
Further, for $f \in H_q$ we have $\widetilde{f} \in L_q({\mathbb{R}^d}, \rho)$ and thus
\begin{align*}
e(f,P_n) = \int_{\mathbb{R}^d} \widetilde{f}(z) \widetilde{h}(z) \rho(\rd z).
\end{align*}

Using H\"older's inequality we have
\begin{align*}
|e(f,P_n)| \le & \int_{\mathbb{R}^d}
\left| \widetilde{f}(z) \right| \left|\widetilde{h}(z)\right| \rho(\rd z) \\
\le & \left(\int_{\mathbb{R}^d} \left|\widetilde{f}(z)\right|^q \rho(\rd z)
\right)^{1/q}  \left(\int_{\mathbb{R}^d} \left|\widetilde{h}(z)\right|^p \rho(\rd z) \right)^{1/p},
\end{align*}
where $1 \le p, q \le \infty$ are H\"older conjugates $1/p + 1/q = 1$, with the obvious modifications for $p, q = \infty$. Thus the result follows.
\end{proof}

Thus we can use the bounds from the theorems above to obtain a bound on the integration error $|e(f,P_n)|$, where $P_n$ is the set of points from the Markov chain, for functions $f$ with representation \eqref{eq_f_rep} and $\|f\|_{H_1} < \infty$.

\section{Delta-covers for the sphere}\label{ss: coro_sphere}

We use Theorem~\ref{thm_main} where $\pi$ is the normalized Lebesgue surface measure on the sphere $\mathbb{S}^d$. Let $\psi: [0,1]^d \to \mathbb{S}^d$ be an area-preserving mapping
from $[0,1]^d$ to $\mathbb{S}^d$ (i.e., a generator function), see \cite{Fang_Wang}, and let the update function
$\varphi: \mathbb{S}^d \times [0,1]^d \to \mathbb{S}^d$ be given by
$$\varphi(x,u) = \psi(u).$$
The transition kernel is given by $K(x,A) = \pi(A)$ which is uniformly ergodic with $(\a,M)$ for $\a=0$ and $M=1$.\\

In order to obtain a bound on the spherical cap discrepancy using Theorem~\ref{thm_main},
it remains to construct a $\delta$-cover on $\mathbb{S}^d$ of suitable size.
We construct a $\delta$-cover $\Gamma_\delta$ by specifying a set of centers and heights in the following.

\begin{lemma}\label{lem_delta_cover_sphere}
Let $\mathbb{S}^d \subseteq \mathbb{R}^{d+1}$ denote the $d$-dimensional sphere.
Let $\mathscr{C} = \{C(x,t): x \in \mathbb{S}^d, -1 \le t \le 1\}$
denote the set of spherical caps of $\mathbb{S}^d$. Then for any $\delta > 0$
there exists a $\delta$-cover $\Gamma_\delta$ of $\mathscr{C}$ with respect to the normalized surface Lebesgue measure on $\mathbb{S}^d$ with $|\Gamma_\delta| \le c d^{d+1} \delta^{-2(d+1)}$, where $c > 0$ is a constant independent of $d$ and $\delta$.
\end{lemma}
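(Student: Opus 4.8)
The plan is to construct $\Gamma_\delta$ as a product of a net on the cap centres and a net on the cap heights, and to bound the measure of the resulting sandwiching annulus by exploiting that the measure of a cap depends only on its geodesic radius. Write $\varrho(x,y)=\arccos\langle x,y\rangle$ for the geodesic distance on $\mathbb{S}^d$, and for $-1\le t\le 1$ set $\theta=\arccos t\in[0,\pi]$, so that $C(x,t)=\{y\in\mathbb{S}^d:\varrho(x,y)<\theta\}$ is the open geodesic ball of radius $\theta$ about $x$. By rotation invariance the normalised surface measure of such a cap is
\[
 g(\theta):=\pi(C(x,t))=\frac{\int_0^\theta(\sin\varphi)^{d-1}\,\rd\varphi}{\omega_d},\qquad \omega_d:=\int_0^\pi(\sin\varphi)^{d-1}\,\rd\varphi .
\]
The function $g$ increases from $0$ to $1$ and has derivative $g'(\theta)=(\sin\theta)^{d-1}/\omega_d$, so it is Lipschitz with constant $L_d:=1/\omega_d$, the maximum being attained at the equator $\theta=\pi/2$.

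First I would fix $\epsilon:=\delta/(4L_d)$ and choose an $\epsilon$-net $x_1,\dots,x_m$ of $(\mathbb{S}^d,\varrho)$ together with radii $0=\theta_0<\theta_1<\dots<\theta_N=\pi$ that are equidistributed in measure, i.e. $g(\theta_k)-g(\theta_{k-1})\le\delta/4$, which needs only $N\le\lceil 4/\delta\rceil$ radii. I would then set $\Gamma_\delta=\{C(x_i,\cos\theta_k):1\le i\le m,\ 0\le k\le N\}\subseteq\mathscr{C}$. Given an arbitrary cap $C(x,t)$ of radius $\theta$, pick $x_i$ with $\varrho(x,x_i)\le\epsilon$; the triangle inequality for $\varrho$ gives the clean set inclusions $C(x_i,\cos(\theta-\epsilon))\subseteq C(x,t)\subseteq C(x_i,\cos(\theta+\epsilon))$ (valid whenever $\theta+\epsilon\le\pi$; the remaining caps are treated below). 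Rounding $\theta+\epsilon$ up and $\theta-\epsilon$ down to the nearest net radii $\theta^+,\theta^-$ produces $C:=C(x_i,\cos\theta^-)$ and $D:=C(x_i,\cos\theta^+)$ in $\Gamma_\delta$ with $C\subseteq C(x,t)\subseteq D$ and
\[
 \pi(D\setminus C)=g(\theta^+)-g(\theta^-)\le\bigl(g(\theta+\epsilon)-g(\theta-\epsilon)\bigr)+\tfrac{\delta}{2}\le 2L_d\epsilon+\tfrac{\delta}{2}=\delta ,
\]
where the first inequality uses the measure-spacing of the radial net and the second the Lipschitz bound on $g$. Near-empty caps collapse the inner set to $\emptyset$ and cause no trouble; the symmetric case of near-complete caps (radius within $\epsilon$ of $\pi$) I would reduce to the near-empty case by applying the construction to the complementary cap, using a centre net that is symmetric under $x\mapsto-x$, the only discrepancies then occurring on the $\pi$-null cap boundaries.

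Finally I would estimate $|\Gamma_\delta|\le m(N+1)$. A standard volumetric covering bound for $\mathbb{S}^d\subset\mathbb{R}^{d+1}$ gives $m\le(c_0/\epsilon)^{d+1}$, and substituting $\epsilon=\delta/(4L_d)$ together with the quantitative estimate $L_d=1/\omega_d\le c_1\sqrt d$ yields $|\Gamma_\delta|\le(c_2\sqrt d/\delta)^{d+1}(4/\delta+2)$, which for $\delta\le 1$ is at most $c\,d^{d+1}\delta^{-2(d+1)}$ once the dimensional constants are absorbed (the factor $(c_2/\sqrt d)^{d+1}$ is bounded in $d$, and $\delta^{d}\le1$); the range $\delta>1$ is trivial. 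In fact this construction gives the sharper exponent $\delta^{-(d+2)}$, which is more than enough for the statement and for Corollary~\ref{cor_sphere}. I expect the main obstacle to be the quantitative Lipschitz estimate $L_d=O(\sqrt d)$, which amounts to the lower bound $\omega_d=\sqrt\pi\,\Gamma(d/2)/\Gamma((d+1)/2)\ge c/\sqrt d$ on the normalising constant of the radial profile; this is what forces the centre net to have spacing of order $\delta/\sqrt d$ and thereby fixes the dependence on $d$ and $\delta$. Everything else is a routine net-and-round argument, in contrast to the box case of Lemma~\ref{lem_delta_cover_ex}, which instead invokes the Talagrand–Haussler points of Proposition~\ref{prop: talagrand_haussler}.
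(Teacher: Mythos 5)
Your construction is correct in its main regime, and it takes a genuinely different route from the paper. The paper builds the centre net from an equal--area partition of $\mathbb{S}^d$ (Leopardi) with mesh norm $c_d N^{-1/d}$, uses the uniform height grid $T=\{-1+k/M\}$, verifies the sandwiching inclusions through the algebraic criterion $t^2+u^2+v^2-2tuv>1$ in the inner--product variable $v=\langle x,y\rangle$ (see the paper's Lemma~\ref{lem: delta_cover_sphere}), and bounds the annulus measure by estimating incomplete beta integrals, arriving at $|\Gamma_\delta|\le c\,d^{d+1}\delta^{-2(d+1)}$. You instead use a generic $\epsilon$-net of centres (volumetric bound), a radial net equidistributed \emph{in measure}, inclusions from the triangle inequality for the geodesic metric, and the Lipschitz constant $L_d=1/\omega_d$ of the radial profile $g$. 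Note that your $\omega_d=\sqrt{\pi}\,\Gamma(d/2)/\Gamma((d+1)/2)$ \emph{is} the paper's normalizing constant $B(1;d/2,1/2)$, so the essential quantitative input ($\omega_d\ge c/\sqrt d$) is the same in both proofs; your packaging is cleaner, avoids the beta-integral manipulations, and yields the sharper cardinality $O\bigl(d^{(d+1)/2}\delta^{-(d+2)}\bigr)$, which of course still implies the stated bound.

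There is, however, one genuine gap: the near-complete caps, $\theta>\pi-\epsilon$, which your complementation remark does not actually close. For such a cap there may be \emph{no} cap centred at a net point that contains $C(x,t)$: any cap $C(y,u)$ with $u\ge-1$ omits the antipode $-y$ of its centre, and $-y\in C(x,t)$ whenever $\varrho(y,x)>\pi-\theta$; since the net only guarantees a centre within $\epsilon>\pi-\theta$ of $x$, the required containment can fail for every net centre and every height, symmetric net or not. Your fallback --- inclusions holding ``up to $\pi$-null cap boundaries'' --- is not sufficient, because the definition of a $\delta$-cover demands exact inclusions $C\subseteq A\subseteq D$, and exactness is genuinely used in the proof of Theorem~\ref{thm_main}: the empirical measure $\frac1n\sum_{i=1}^n 1_{x_i\in\cdot}$ charges null sets, so the pointwise inequality $1_{A}\le 1_{D}$ cannot be relaxed to inclusion modulo a null set. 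The repair is one line: Theorem~\ref{thm_main} only requires $\Gamma_\delta\subseteq\mathcal{B}(G)$, not $\Gamma_\delta\subseteq\mathscr{C}$, so adjoin the single Borel set $\mathbb{S}^d$ to your $\Gamma_\delta$; for $\theta>\pi-\epsilon$ take $D=\mathbb{S}^d$ and $C=C(x_i,\cos\theta^-)$, whence $\pi(D\setminus C)=1-g(\theta^-)\le \bigl(1-g(\theta-\epsilon)\bigr)+\delta/4\le 2L_d\epsilon+\delta/4<\delta$. (To be fair, the paper's own proof skates over the same corner: its prescription $u+2/M\le t$ is impossible for $t<-1+2/M$, and taking $u=-1$ there gives $C(y_i,-1)=\mathbb{S}^d\setminus\{-y_i\}$, which need not contain $C(x,t)$ either.) With that one-line amendment your argument is complete and, as you note, stronger than required.
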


The result of Corollary~\ref{cor_sphere} follows now from Theorem~\ref{thm_main} and Lemma~\ref{lem_delta_cover_sphere} by setting $\delta = d^{1/2} n^{-1/2}$. The remainder of this subsection is concerned with the proof of Lemma~\ref{lem_delta_cover_sphere}.

Let $y_1, y_2, \ldots, y_N \in \mathbb{S}^d$ be given such that
\begin{equation}\label{mesh_norm}
\sup_{x \in \mathbb{S}^d} \min_{1 \le i \le N} \|x-y_i\|_2 \le c_d N^{-1/d},
\end{equation}
where $c_d > 0$ is a constant depending only on $d$. The existence of such point sets follows, for instance, from \cite{L07}.
Therein an equal area partition of $\mathbb{S}^d$ into $N$ parts was shown with diameter bounded by $c_d N^{-1/d}$.
Thus by taking one point in each partition we obtain \eqref{mesh_norm}.
Indeed, from the proof of \cite[Theorem~2.6]{L07} we obtain that the constant $c_d$ can be chosen as
\begin{equation*}
c_d = 8 \left(\frac{d \sqrt{\pi}\; \Gamma(d/2)}{\Gamma((d+1)/2)}\right)^{1/d} \le 8 d^{1/d} \pi^{1/(2d)} \le 8 \cdot 3^{1/3} \sqrt{\pi} < 21,
\end{equation*}
where $\Gamma(x)=\int_0^\infty  t^{x-1} e^{-t} \rd t$ denotes the Gamma function.
For $x, y \in \mathbb{S}^d$ we have $\|x-y\|_2^2 = 2(1-\langle x, y \rangle)$.

Let $v = \langle x, y \rangle$. Then we obtain the following result.
\begin{lemma}
 We have $C(x,t) \subseteq C(y,u)$ if and only if $v = \langle x, y \rangle > u$ and
\begin{equation}\label{eq_subsets}
t^2 + u^2 + v^2 -2tuv > 1.
\end{equation}
\end{lemma}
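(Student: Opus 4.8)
The plan is to reduce the whole question to the two–dimensional geometry of the plane $\mathrm{span}\{x,y\}$, since both defining inequalities $\langle x,z\rangle>t$ and $\langle y,z\rangle>u$ depend on $z\in\mathbb{S}^d$ only through its orthogonal projection onto that plane. Assuming first $v^2<1$, I would write $y=vx+\sqrt{1-v^2}\,w$ with $w\perp x$, $\|w\|_2=1$, and decompose $z=ax+bw+z'$ with $z'\perp\mathrm{span}\{x,y\}$ and $a^2+b^2+\|z'\|_2^2=1$. Then $\langle x,z\rangle=a$ and $\langle y,z\rangle=va+\sqrt{1-v^2}\,b$, so the problem becomes a one–parameter optimisation.

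The key step is that $C(x,t)\subseteq C(y,u)$ holds exactly when $\langle y,z\rangle>u$ for every point of the open cap, which is governed by the minimum of $\langle y,z\rangle$ over the closed cap $\{a\ge t\}$. For fixed $a$ this minimum is attained at $z'=0$, $b=-\sqrt{1-a^2}$, giving $f(a)=va-\sqrt{1-v^2}\sqrt{1-a^2}$; substituting $a=\cos\theta$ and $v=\cos\beta$ yields $f(a)=\cos(\theta+\beta)$, so the minimum over $a\in[t,1]$ equals $\cos(\arccos t+\arccos v)=tv-\sqrt{(1-t^2)(1-v^2)}$ whenever $\arccos t+\arccos v\le\pi$, and equals $-1$ (the antipode $-y$) otherwise. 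I would then show that the minimiser lies on the boundary $\{a=t\}$, so that the open–cap containment is controlled by this closed–cap minimum, and conclude that containment is equivalent to $u<tv-\sqrt{(1-t^2)(1-v^2)}$. Isolating the square root and squaring turns this into the stated symmetric condition $t^2+u^2+v^2-2tuv>1$.

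To pin down the two stated hypotheses I would argue as follows. The condition $v>u$ is necessary because $x\in C(x,t)$ (as $t<1$) must lie in $C(y,u)$, forcing $\langle y,x\rangle=v>u$. The quadratic inequality has a clean geometric meaning: $t^2+u^2+v^2-2tuv-1$ equals minus the determinant of the $3\times3$ matrix with unit diagonal and off-diagonal entries $t,u,v$, and this is positive precisely when the two boundary spheres $\{\langle x,z\rangle=t\}$ and $\{\langle y,z\rangle=u\}$ fail to intersect; hence the quadratic condition asserts that the two caps are nested or disjoint, and the role of the membership condition is to select the nesting direction $C(x,t)\subseteq C(y,u)$.

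The main obstacle will be the bookkeeping of strict versus non-strict inequalities and the selection of the correct root of the quadratic. Squaring $u-tv<-\sqrt{(1-t^2)(1-v^2)}$ reproduces the symmetric inequality $t^2+u^2+v^2-2tuv>1$ but also admits the spurious solutions $u>tv+\sqrt{(1-t^2)(1-v^2)}$, which correspond to the reverse nesting $C(y,u)\subseteq C(x,t)$; the delicate point is to verify that in the regime in which the lemma is applied the sign condition (equivalently $u<tv$) together with $v>u$ rules out this branch and matches the intended strictness at the boundary case $u=tv-\sqrt{(1-t^2)(1-v^2)}$. I would also separately dispose of the degenerate configurations $v^2=1$ (where $y=\pm x$) and $\arccos t+\arccos v\ge\pi$ (where the minimiser is $-y$ and containment is impossible), so that the biconditional holds in the stated form.
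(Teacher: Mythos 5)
Your two--dimensional reduction and the optimisation of $\langle y,z\rangle$ over the cap are sound, and they amount to a more careful version of the paper's own argument, which takes the extremal point of $C(x,t)$ on the great circle through $x$ and $y$ and applies the chord-length addition formula. The problem is precisely the step you yourself flagged as delicate: ruling out the reverse-nesting branch $u > tv+\sqrt{(1-t^2)(1-v^2)}$ using only the stated hypothesis $v>u$. That step cannot be carried out, because the lemma as stated is false. Take $t=0$, $u=1/2$, $v=9/10$: then $v>u$ and $t^2+u^2+v^2-2tuv=0.25+0.81=1.06>1$, so the right-hand side of the equivalence holds; but $C(x,0)$ is an open hemisphere while $C(y,1/2)$ is a cap of angular radius $\pi/3$ whose centre lies at angular distance $\arccos(9/10)\approx 0.451$ from $x$, so $C(x,0)\not\subseteq C(y,1/2)$ --- in fact the reverse inclusion $C(y,1/2)\subseteq C(x,0)$ holds. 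As your Gram-determinant observation correctly explains, \eqref{eq_subsets} only asserts that the two boundary circles do not meet, i.e.\ that the caps are nested one way or the other, or disjoint; the condition $v>u$ (equivalently $x\in C(y,u)$) excludes disjointness but not reverse nesting, as the example shows. Your proposed extra condition $u<tv$ is exactly what is missing, but it is not among the lemma's hypotheses, so no proof of the stated biconditional can succeed. The strictness issue you anticipated is also real: for $t=v=1/2$, $u=-1/2$ every point of the open cap $C(x,t)$ is at angle less than $\arccos t+\arccos v=\arccos u$ from $y$, so $C(x,t)\subseteq C(y,u)$ holds, while \eqref{eq_subsets} holds with equality rather than strict inequality; hence the ``only if'' direction fails as well.

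The correct statement that your computation actually yields is: $C(x,t)\subseteq C(y,u)$ if and only if $\arccos u \ge \arccos t+\arccos v$, equivalently $tv-u\ge\sqrt{(1-t^2)(1-v^2)}$ together with $\arccos t+\arccos v\le\pi$; that is what you should prove and state. The paper's own proof shares the defect of the lemma (it implicitly assumes the extremal point realises the forward branch and ignores the boundary case), but the damage does not propagate: in the proof of Lemma~\ref{lem: delta_cover_sphere} the lemma is invoked only with $v>1-1/(2M^2)$ and with the inner cap's height exceeding the outer one's by at least $2/M$, which gives $tv-u\ge t(v-1)+2/M>2/M-1/(2M^2)>0$ for $M\ge1$; combined with \eqref{eq_subsets} this does select the forward branch $u<tv-\sqrt{(1-t^2)(1-v^2)}$, and $\arccos t+\arccos v\le\pi$ is checked similarly. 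So your plan, completed with the corrected hypotheses, both repairs the lemma and leaves the paper's application intact.
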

\begin{proof}
 The condition $v > u$ ensures that $x \in C(y,u)$. Let $z \in C(x,t)$, that is, $\langle z, x \rangle > t$. Then $z \in C(y,u)$ if and only if $\langle z, y \rangle > u$. The point $z$ is furthest from $y$ (as measured by the Euclidean distance) if it lies on the great circle containing $x$ and $y$. Assuming that $x,y,z$ all lie on the same great circle such that $x$ is between $y$ and $z$, we have
\begin{align*}
\|y-z\| = & 2\sin \left(\arcsin \frac{\|x-y\|}{2} + \arcsin \frac{\|x-z\|}{2}\right) \\ = & \|x-y\| \sqrt{1 - \|x-z\|^2/4} + \|x-z\| \sqrt{1-\|x-y\|^2/4}.
\end{align*}
The result now follows by using $\|x-y\|^2 = 2(1-v)$ and $ \|x-z\|^2 < 2(1-t)$.
\end{proof}
The next lemma gives us a $\delta$-cover of $\mathcal{C}$ with respect to $\pi$.

\begin{lemma} \label{lem: delta_cover_sphere}
 Let
\[
N = \left \lceil \frac{35^d c_d^d}{B^{2d}(1;d/2,1/2) \delta^{2d}} \right\rceil.
\]
 Let $M = \lfloor N^{1/d}/c_d \rfloor$ and $T = \{-1+ k/M: k = 0,1,\ldots, 2M\}$.
 Then the set
$$\Gamma_\delta = \{C(y_i,t): 1 \le i \le N, t \in T\}$$
is a $\delta$-cover of $\mathcal{C}$ with respect to $\pi$.
  \end{lemma}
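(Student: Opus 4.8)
The plan is to verify directly the defining property of a $\delta$-cover: given an arbitrary spherical cap $C(x,s)\in\mathcal{C}$ with $x\in\mathbb{S}^d$ and $s\in[-1,1]$, I would exhibit two members of $\Gamma_\delta$, an outer cap $C_{\mathrm{out}}$ and an inner cap $C_{\mathrm{in}}$, with $C_{\mathrm{in}}\subseteq C(x,s)\subseteq C_{\mathrm{out}}$ and $\pi(C_{\mathrm{out}}\setminus C_{\mathrm{in}})\le\delta$. Both sandwiching caps will be taken with the same center $y_i$, a grid center close to $x$, so that only the heights differ.

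First I would use the mesh property \eqref{mesh_norm} to pick $y_i$ with $\|x-y_i\|_2\le c_d N^{-1/d}$; writing $v=\langle x,y_i\rangle$ this gives $1-v=\tfrac12\|x-y_i\|_2^2$, hence $\sqrt{1-v^2}\le\sqrt{2(1-v)}=\|x-y_i\|_2\le c_d N^{-1/d}$. Next, using the containment characterization \eqref{eq_subsets}, I would compute the exact critical heights of caps about $y_i$: the smallest such cap still containing $C(x,s)$ has height $u^\ast_{\mathrm{out}}=sv-\sqrt{(1-s^2)(1-v^2)}$, and the largest such cap still contained in $C(x,s)$ has height $u^\ast_{\mathrm{in}}=sv+\sqrt{(1-s^2)(1-v^2)}$ (these are just $\cos(\arccos s\mp\arccos v)$). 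Rounding onto the height grid $T$, I would take $u_{\mathrm{out}}\in T$ with $u^\ast_{\mathrm{out}}-1/M\le u_{\mathrm{out}}\le u^\ast_{\mathrm{out}}$ and $u_{\mathrm{in}}\in T$ with $u^\ast_{\mathrm{in}}\le u_{\mathrm{in}}\le u^\ast_{\mathrm{in}}+1/M$, which preserves $C(y_i,u_{\mathrm{in}})\subseteq C(x,s)\subseteq C(y_i,u_{\mathrm{out}})$ while keeping $u_{\mathrm{in}}-u_{\mathrm{out}}\le 2\sqrt{(1-s^2)(1-v^2)}+2/M$. The degenerate cases in which the critical height falls outside $[-1,1]$ (when $C(x,s)$ is so small that no nonempty cap about $y_i$ fits inside, or so large that it lies in no proper cap) are absorbed by the endpoints $t=1$ and $t=-1$ of $T$, giving the empty cap and the almost-full cap.

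It then remains to bound $\pi(C_{\mathrm{out}}\setminus C_{\mathrm{in}})$, the spherical zone $\{z\in\mathbb{S}^d:u_{\mathrm{out}}<\langle y_i,z\rangle\le u_{\mathrm{in}}\}$. Expressing the normalized surface measure in the height variable gives $\pi(C_{\mathrm{out}}\setminus C_{\mathrm{in}})=\frac{1}{B(1;d/2,1/2)}\int_{u_{\mathrm{out}}}^{u_{\mathrm{in}}}(1-t^2)^{d/2-1}\,dt\le\frac{u_{\mathrm{in}}-u_{\mathrm{out}}}{B(1;d/2,1/2)}$ for $d\ge2$. Combining this with the height-gap bound, with $\sqrt{(1-s^2)(1-v^2)}\le c_d N^{-1/d}$, and with $1/M\le 2c_d N^{-1/d}$ (valid because $N^{1/d}/c_d\ge2$ for the prescribed $N$), I get $\pi(C_{\mathrm{out}}\setminus C_{\mathrm{in}})\le \frac{6\,c_d N^{-1/d}}{B(1;d/2,1/2)}$. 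The choice $N\ge 35^d c_d^d\,B(1;d/2,1/2)^{-2d}\delta^{-2d}$ yields $N^{1/d}\ge 35\,c_d/(B(1;d/2,1/2)^2\delta^2)$, and since $B(1;d/2,1/2)\,\delta\le 35/6$ this forces the gap to be at most $\delta$, as required.

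I expect the main obstacle to be the middle step: converting the purely geometric ``smallest containing / largest contained cap'' into the algebraic thresholds through \eqref{eq_subsets}, checking that the rounding onto $T$ keeps both inclusions intact, and handling the boundary cases where a cap degenerates to the empty or full cap. Once this is set up, the measure estimate is essentially routine; the generous exponent $2d$ and the constant $35^d$ in the definition of $N$ are chosen precisely to swallow the density factor $(1-t^2)^{d/2-1}$ (including its mild blow-up for small $d$) together with the two independent error sources, the center displacement and the height-grid spacing.
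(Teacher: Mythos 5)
Your construction is essentially the paper's: the same family $\Gamma_\delta$, the same strategy of picking a nearest center $y_i$ via \eqref{mesh_norm}, sandwiching $C(x,s)$ between two caps about $y_i$ with heights rounded onto the grid $T$, and then bounding the measure of the resulting spherical zone. Your exact critical heights $sv \mp \sqrt{(1-s^2)(1-v^2)}$ are a clean substitute for the paper's margin argument (the paper instead takes $u + 2/M \le s \le w - 2/M$ and verifies \eqref{eq_subsets} directly, splitting into the cases $u\ge 0$ and $u<0$); just be aware that the quadratic in \eqref{eq_subsets} has two roots, and only the branch $u \le sv - \sqrt{(1-s^2)(1-v^2)}$ actually corresponds to containment -- your geometric derivation lands on the correct branch, so this is a matter of presentation, not correctness. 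The extreme boundary case you gloss over (a cap $C(x,s)$ with $s$ within $O(1/M^2)$ of $-1$ contains every antipode $-y_j$, so strictly speaking no cap of $\Gamma_\delta$ contains it) is equally unaddressed in the paper's proof, so I do not count it against you.

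The genuine gap is the zone-measure estimate when $d=1$. Your bound $\pi(C_{\mathrm{out}}\setminus C_{\mathrm{in}}) \le (u_{\mathrm{in}}-u_{\mathrm{out}})/B(1;d/2,1/2)$ rests on $(1-t^2)^{d/2-1}\le 1$, which holds only for $d\ge 2$, and your closing remark that the blow-up of the density for small $d$ is ``swallowed'' by the constants $35^d c_d^d$ is wrong in principle: for $d=1$ a zone of height-width $h$ abutting the pole has measure of order $\sqrt{h}$ (since $\int_{1-h}^1(1-t^2)^{-1/2}\,\rd t = \arccos(1-h) \approx \sqrt{2h}$), so no constant can rescue a bound that is linear in the height gap. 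The repair is exactly what the paper does: it bounds the zone integral by $\int_{1-L}^{1}(1-z)^{-1/2}\,\rd z \le \sqrt{L}$ with $L \le 35/M$, i.e.\ a H\"older-$1/2$ estimate in the gap, valid for every $d$. This also explains the calibration of $N$: with $1/M \sim B^2\delta^2/35$ the square-root bound gives zone measure $\lesssim \delta$, which is why the prescribed $N$ carries $\delta^{-2d}$ rather than $\delta^{-d}$; your linear bound, where it is valid ($d\ge2$), in fact yields the stronger estimate $O(\delta^2)$ and would have permitted a smaller grid. So your proof goes through for $d\ge2$, but for $d=1$ you must replace the linear density bound by the square-root estimate.
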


\begin{proof}
 Let $C(x,t) \in \mathcal{C}$ be an arbitrary spherical cap.
Let $y_i$ be such that $\|x-y_i\|_2 \le C_d N^{-1/d} \le 1/M$, thus
\begin{equation*}
\langle x, y_i \rangle > 1-\frac{1}{2M^2}.
\end{equation*}
Let $u, w \in T$ be such that $u + 2/M \le t \le w - 2/M$ and $w-u \le 5/M$.

We now show that for this choice we have $C(x,t) \subseteq C(y_i, u)$. First assume that $u \ge 0$. Then using \eqref{eq_subsets} with $v > 1- 1/(2M^2)$ and $t-u \ge 2/M$ we obtain
\begin{align*}
t^2+u^2 + v^2-2 tuv \ge & t^2-2tu +u^2 + v^2 \\ \ge & (t-u)^2 + (1-1/(2M^2))^2 \\ \ge & 4/M^2 + 1 - 1/M^2 + 1/(4M^4) > 1.
\end{align*}
Now assume that $u < 0$. Then
\begin{align*}
t^2+u^2 + v^2-2 tuv \ge & t^2-2tu +u^2 + v^2 - tu/M^2 \\ \ge & (t-u)^2 + (1-1/(2M^2))^2 - 1/M^2 \\ \ge & 4/M^2 + 1 - 1/M^2 + 1/(4M^4) - 1/M^2 > 1.
\end{align*}
Thus $C(x,t) \subseteq C(y_i,u)$. We now show that $C(y_i, w) \subseteq C(x,t)$. If $w=1$ we have $C(y_i,w) = \emptyset$ in which case the result holds trivially. Thus we can assume that $t < 1-2/M$, which implies that $y_i \in C(x,t)$. We use \eqref{eq_subsets} again with $v > 1-1/(2M^2)$ and $|u-t| \ge 1/M$. Thus the result follows by the same arguments
as in the previous case.

Thus we have $C(y_i, w) \subseteq C(x,t) \subseteq C(y_i, u)$ with $w-u \le 5/M$. For $w,u \ge 0$ we have
\begin{align*}
& 2 B(1;d/2,1/2)\, \pi(C(y_i,u) \setminus C(y_i,w)) \\
= & \int_{1-w^2}^{1-u^2} z^{d/2-1} (1-z)^{-1/2} \rd z \\
\le & \int_{\max\{0, 1- u^2 - 10u/M - 25/M^2\}}^{1-u^2} z^{d/2-1} (1-z)^{-1/2} \rd z \\ \le & \sup_{10/M+25/M^2 \le r \le 1} \int_{r-10/M-25/M^2}^r z^{d/2-1} (1-z)^{-1/2} \rd z \\ \le & \int_{1-10/M-25/M^2}^1 (1-z)^{-1/2} \rd z \\ \le & \sqrt{10/M+25/M^2} \le \sqrt{35/M}.
\end{align*}
Thus, in general we have
\begin{equation*}
\pi(C(y_i,u) \setminus C(y_i,w))
\le \frac{\sqrt{35/M}}{B(1; d/2, 1/2)}
\le \frac{\sqrt{35}}{B(1; d/2, 1/2)} \frac{\sqrt{c_d}}{N^{1/(2d)}}.
\end{equation*}
The last expression is bounded by $\delta$
for $$N = \left \lceil \frac{35^d c_d^d}{B^{2d}(1;d/2,1/2) \delta^{2d}} \right\rceil.$$
\end{proof}
\begin{lemma}  \label{lem: card_delta_cover_sphere}
 Assume that $N^{1/d}/c_d > 1/2$ (otherwise \eqref{mesh_norm} is trivial).
 Then we have for $\delta = d^{1/2} n^{-1/2}$ that there exist an absolute constant $c>0$, such that
 \[
   |\Gamma_\delta| \le c\, n^{d+1}.
 \]
\end{lemma}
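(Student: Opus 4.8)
The plan is to read off $|\Gamma_\delta|$ directly from the construction in Lemma~\ref{lem: delta_cover_sphere} and then substitute $\delta = d^{1/2}n^{-1/2}$. Since $\Gamma_\delta = \{C(y_i,t): 1\le i\le N,\ t\in T\}$ with $|T| = 2M+1$, the elementary bound $|\Gamma_\delta| \le N(2M+1)$ holds. Because $M = \lfloor N^{1/d}/c_d\rfloor \le N^{1/d}/c_d$, and the standing hypothesis $N^{1/d}/c_d > 1/2$ gives $1 < 2N^{1/d}/c_d$, one gets $2M+1 \le 4N^{1/d}/c_d$ and hence
\[
|\Gamma_\delta| \le \frac{4}{c_d}\,N^{1+1/d}.
\]
Thus the whole statement reduces to estimating $N^{1+1/d}$, and everything rests on the explicit value of $N$.

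Next I would insert $N = \big\lceil 35^d c_d^d\, B^{-2d}(1;d/2,1/2)\,\delta^{-2d}\big\rceil$. The ceiling costs at most an additive $1$; moreover the same hypothesis $N^{1/d}/c_d>1/2$ forces the term inside the ceiling to exceed $1$ (otherwise $N=1$ and $N^{1/d}/c_d=1/c_d\le 1/8$, a contradiction), so up to a bounded factor one may replace $N$ by $35^d c_d^d B^{-2d}(1;d/2,1/2)\,\delta^{-2d}$. Raising to the power $1+1/d$ and setting $\delta = d^{1/2}n^{-1/2}$, so that $\delta^{-2d} = d^{-d}n^{d}$ and $\delta^{-2(d+1)} = d^{-(d+1)}n^{d+1}$, I would collect the powers of $n$: since $N$ is proportional to $n^{d}$, the exponent coming out of $N^{1+1/d}$ is exactly $d(1+1/d)=d+1$, as claimed, and all remaining factors are powers of $d$, $c_d$ and $B(1;d/2,1/2)$.

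The crux is then to control these residual dimension-dependent factors. Here I would use the closed form $c_d = 8\big(d\sqrt\pi\,\Gamma(d/2)/\Gamma((d+1)/2)\big)^{1/d}$, which, together with $B(1;d/2,1/2) = \sqrt\pi\,\Gamma(d/2)/\Gamma((d+1)/2)$, yields the identity $c_d^d = 8^d d\,B(1;d/2,1/2)$; this cancels one power of $B(1;d/2,1/2)$ against the $c_d^d$ in the numerator. Combined with the uniform bound $c_d<21$ already recorded in the text (which keeps $1/c_d$ and the $c_d^{1+1/d}$ factor harmless) and a uniform two-sided control on the quantity $d\,B^2(1;d/2,1/2)$ (which stays in a bounded interval, tending to $2\pi$), the block $d^{d+1}$ generated by $c_d^d$ and $B^{-2d}(1;d/2,1/2)$ is matched by the factor $d^{-(d+1)}$ produced by $\delta^{-2(d+1)}$, and what survives multiplies $n^{d+1}$, giving $|\Gamma_\delta|\le c\,n^{d+1}$. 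I expect the main obstacle to be precisely this bookkeeping: one must carry the powers of $d$, $c_d$ and the incomplete beta value simultaneously through the exponent $1+1/d$, verify that the net power of $n$ is exactly $d+1$, and check that after the substitution $\delta = d^{1/2}n^{-1/2}$ the dimension-dependent prefactor collapses to the constant $c$.
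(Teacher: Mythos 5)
Your proposal has the same skeleton as the paper's proof: the bound $|\Gamma_\delta|\le N(2M+1)$, the step $2M+1\le 4N^{1/d}/c_d$ from the hypothesis $N^{1/d}/c_d>1/2$, the absorption of the ceiling in $N$, and the bookkeeping that $N^{1+1/d}$ produces exactly $n^{d+1}$ after setting $\delta=d^{1/2}n^{-1/2}$. Your endgame is a legitimate variant: the identity $c_d^d=8^d\,d\,B(1;d/2,1/2)$ is correct (since $B(1;d/2,1/2)=\sqrt{\pi}\,\Gamma(d/2)/\Gamma((d+1)/2)$), and exploiting it together with the boundedness of $Q:=d\,B^2(1;d/2,1/2)$ is cleaner, and in fact sharper, than the paper's route via $c_d<21$ plus Stirling's formula.

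However, your last step --- that after matching $d^{d+1}$ against $d^{-(d+1)}$ the dimension-dependent prefactor ``collapses to the constant $c$'' --- is wrong, and this is a genuine gap. Carrying your own bound through with $\delta=d^{1/2}n^{-1/2}$ gives, exactly,
\[
16\cdot 35^{d+1}\,c_d^{d}\,B^{-2(d+1)}(1;d/2,1/2)\,\delta^{-2(d+1)}
=\frac{560}{\sqrt{Q}}\,\Bigl(\frac{280}{Q}\Bigr)^{d}\,\sqrt{d}\;n^{d+1},
\]
and since $Q$ lies between $2\pi$ and $\pi^2$ (tending to $2\pi$), the surviving prefactor is of order $\sqrt{d}\,(280/(2\pi))^{d}\approx\sqrt{d}\cdot 45^{d}$, which is exponential in $d$, not an absolute constant: the factors $35^{d+1}$ and $8^{d}$ are only partially offset by $Q^{-(d+1)}$, and the $d$-power block produced by $c_d^{d}B^{-2(d+1)}$ is $d^{d+3/2}$ rather than $d^{d+1}$, so a stray $\sqrt{d}$ survives as well. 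You are in good company: the paper's own final step commits the identical error, since the quantity it feeds into Stirling's formula, $280\cdot735^{d}(\Gamma((d+1)/2))^{2(d+1)}/\bigl(d^{d+1}(\Gamma(d/2)\Gamma(1/2))^{2(d+1)}\bigr)$, equals $(280/Q)(735/Q)^{d}$ and is of order $(735/(2\pi))^{d}$. Indeed no proof can do better, because the construction of Lemma~\ref{lem: delta_cover_sphere} forces $|\Gamma_\delta|\ge N\ge\sqrt{Qd}\,(280/Q)^{d}n^{d}$ (distinct centers give distinct caps of height $0$), which for fixed $n$ and large $d$ exceeds $c\,n^{d+1}$ for every absolute $c$; the lemma is only true in the weaker form $|\Gamma_\delta|\le(c\,n)^{d+1}$ with $c$ absolute, and that weaker form is what both your argument and the paper's actually establish. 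Fortunately it suffices for Corollary~\ref{cor_sphere}, since there only $\sqrt{\log|\Gamma_\delta|}\le\sqrt{(d+1)\log n}+\sqrt{(d+1)\log c}$ is used, and the second term is of order $\sqrt{d}$, which is already present in the stated bound.
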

\begin{proof}
We have
\begin{align*}
|\Gamma_\delta| \le & N (2M+1) \\ \le & N (2N^{1/d} /c_d + 1) \\  \le & 4 N^{1+1/d} / c_d \\
\le & \frac{8 \cdot 35^{d+1} c_d^{d}}{B^{2(d+1)}(1;d/2,1/2) \delta^{2(d+1)}}.
\end{align*}
Thus for $\delta = d^{1/2} n^{-1/2}$ we obtain
\begin{equation*}
|\Gamma_\delta| \le \frac{n^{d+1}}{d^{d+1}} \frac{8 \cdot 35^{d+1} c_d^{d}}{B^{2(d+1)}(1;d/2,1/2)}
\le \frac{n^{d+1}}{d^{d+1}} \frac{280\cdot 735^d (\Gamma((d+1)/2))^{2(d+1)}}{(\Gamma(d/2) \Gamma(1/2))^{2(d+1)}},
\end{equation*}
where $\Gamma$ is the Gamma function. Using Stirling's formula for the Gamma function
\begin{equation*}
\Gamma(z) = \sqrt{\frac{2\pi}{z}}\left(\frac{z}{\mathrm{e}}\right)^z (1 + \mathcal{O}(z^{-1}))
\end{equation*}
we obtain that there is an absolute constant $c > 0$ such that
$
 |\Gamma_\delta| \le c\, n^{d+1}.
$
\end{proof}

\end{appendix}

\section*{Acknowledgements}

Josef Dick is the recipient of an Australian Research Council Queen
Elizabeth II Fellowship (project number DP1097023). 
Daniel Rudolf was supported partially by the ARC 
Discovery Grant DP120101816, by the DFG priority program 1324 
and the DFG Research training group 1523. 
Houying Zhu was supported partially by the ARC Discovery Grant DP120101816.

\end{document}